\definecolor{mauve}{rgb}{0.58,0,0.82}
\let\phi\varphi
\let\epsilon\varepsilon
\newtheorem{thm}{Theorem}[section]
\newtheorem{lem}[thm]{Lemma}
\newtheorem{cor}[thm]{Corollary}
\theoremstyle{definition}
\newtheorem{df}[thm]{Definition}
\newtheorem*{rem}{Remark}
\numberwithin{equation}{section}
\author{S. Barth \and A. Bitter}
\thanks{This work was supported by the
Deutsche Forschungsgemeinschaft (DFG) through the Research Training
Group 1838: Spectral Theory and Dynamics of Quantum Systems}
\address{Department of Mathematics, University of Stuttgart, 70569 Stuttgart, Germany}
\email{simon.barth@mathematik.uni-stuttgart.de}
\email{andreas.bitter@mathematik.uni-stuttgart.de}
\date{}
\title[Two-body interactions and three-body systems]{On the virtual level of two-body interactions and  applications to three-body systems in higher dimensions}
\begin{document}
 
\begin{abstract}
We consider a system of three particles in dimension $d\geq 4$ interacting via short-range potentials, where the two-body Hamiltonians have a virtual level at the bottom of the essential spectrum. In dimensions $d=2$ (in case of fermions) and $d=3$ the corresponding three-body Hamiltonian admits an infinite number of bound states, which is known as the Efimov effect. In this work we prove that this is not the case in higher dimensions. We investigate how the dimension and symmetries of the system influence this effect and prove the finiteness of the discrete spectrum of the corresponding three-body Hamiltonian.
\end{abstract}
\maketitle

\section{Introduction}
In the early seventies a large amount of literature was developed around the investigation of the discrete spectrum of many-body operators, such as \cite{Jafaev76} by D.R.~Yafaev  and \cite{Zhislin1974} by G.~Zhislin. Especially the three-body Hamiltonian has attracted a lot of interest in mathematics and physics since then. It is well known that if the two-body Hamiltonian has negative spectrum, then the three-body Hamiltonian has at most a finite number of negative eigenvalues \cite{Jafaev76}. Theorefore, from a mathematical point of view it was surprising what the physicists V.~Efimov has predicted \cite{Efimov}; he claimed that the three-body system in dimension three exhibits an infinite number of bound states, provided every two-body subsystem admits a positive spectrum and at least two of the three two-body subsystems possess a resonance at zero.

From a physical point of view for a long time it was unclear whether this so called Efimov effect could actually be observed experimentally. It became an outstanding challenge to observe this phenomenon. Thirty-five years after Efimov's prediction a group of experimental physicists observed these quantum states for the first time in an ultracold gas of caesium atoms \cite{Grimm}, where temparatures around $10 \,\mathrm{nK}$  were necessary to make the desired observation. Other observations of the Efimov states in different experiments followed later on, such as in the year 2009, where researchers reported experimental verification of the effect in bosonic quantum gases \cite{Phy2}.

The first rigorous mathematical proof of the Efimov effect was given by D.R.~Yafaev \cite{Jafaev} where he used a symmetrized form of Faddeev equations for the three-body Hamiltonian. He also verified that if at least two of the three two-body Hamiltonians do not have a resonance, then the Efimov effect does not occur \cite{Jafaev76}, which was also predicted in the original work of V.~Efimov. The first proof of this fact based on variational arguments is due to S.~Vugalter and G.~Zhislin \cite{Sem}. Later, this result was generalized in different directions, see \cite{Sem3},\cite{Sem4},\cite{Sem6},\cite{Sem7}.
The first variational proof of the Efimov effect is due to Yu. N. Ovchinnikov and I. M. Sigal \cite{Sigal}, which is based on the Born-Oppenheimer approximation. The proof has been improved later by H. Tamura \cite{Tamura2}. The technique of A. Sobolev \cite{Sobolev}, which uses similar methods as Yafaev's proof, combined with the low-energy behaviour of the resolvent \cite{Jensen3} and the calculation of the distribution of a Toeplitz operator is of great importance. Using this technique he established the low energy asymptotics 
\begin{equation*}
\lim\limits_{z\rightarrow 0^{-}} \frac{N(z)}{|\ln|z||}=\mathcal{U}_0 >0,
\end{equation*}
where $N(z)$ is the counting function of the eigenvalues of the three-body Hamiltonian below $z<0$. Later, H. Tamura improved this result by considering less restrictive pair potentials \cite{Tamura3}. The original work of V. Efimov also discussed whether the effect is possible in dimensions one and two or in three-dimensional subspaces with fixed symmetries. The absence of the Efimov effect in such symmetry subspaces was proved by S. Vugalter and G. Zhislin \cite{Zhislin}. They showed that the discrete spectrum of the Hamiltonian is always finite if one restricts its domain to functions underlying symmetries of irreducible representations of the group $S_3$ of permutation of particles. These subspaces are associated with nonzero angular momentum states or two or three identical fermions. The main reason of the finiteness of the discrete spectrum is due to the fact that in such subspaces a resonance state is always an eigenfunction corresponding to the eigenvalue zero. The existence and non-existence of the Efimov effect in lower dimensions was studied by the same authors in \cite{Sem5}. Under restrictive assumptions on the pair potentials and considered without symmetries they proved that the three-body system can admit at most a finite number of bound states in dimension two. In contrast to that, D.K.~Gridnev has recently proved the existence of the so called super-Efimov effect \cite{Gridnev} which at first was predicted in the physical work \cite{Nishida2}, where he considered a system of three spinless fermions in dimension two, each interacting through spherically-symmetric pair potentials. It turns out that such systems have two infinite series of bound states, each corresponding to the orbital angular momentum $l=\pm 1$ and
\begin{equation*}
\lim\limits_{z\rightarrow 0^{-}} \frac{N(z)}{|\ln|\ln z^2||}=\frac{8}{3\pi}.
\end{equation*}
We want to emphasize that in the case of two-dimensional fermions, the two-body resonance behaves like $c|x|^{-1}$ as $|x|\rightarrow \infty$, which is the borderline case of not being square-integrable. A similar situation occurs in dimension four, where the Hamiltonian is considered without symmetry restricitions. The corresponding resonance function decays as $c|x|^{-2}$, which is also on the edge of being in $L^2$. However, the physicists predicted that there is no Efimov effect in this case \cite{Nishida}. Our goal is to give rigorous mathematical proofs of these statements for any dimension greater than three. To the best of our knowledge it has not been studied in full detail so far. We highlight the differences between the behaviour of the corresponding counting functions and give a precise mathematical proof why the Efimov effect is absent in the four-dimensional case by using Sobolev's technique \cite{Sobolev} of a symmetrized form of the Faddeev equations. We study a more general concept of resonances called virtual levels and prove some of their fundamental characteristics, such as the impact on the behaviour of the resolvent of the two-body Hamiltonians and its consequences for the three-body system. This also leads to the finiteness of the discrete spectrum of the corresponding Hamiltonian in any dimension equal or greater than five.

 The paper is organized as follows.
 In Section~2 we introduce the notation for the three-body system and the corresponding two-body subsystems, and formulate the main results.
 In Section~3 we consider two-body Hamiltonians with a virtual level at the bottom of the essential spectrum. We describe the relation between virtual levels, resonances and actual bound states at zero energy, which depend not only on the dimension of the particles, but also on the corresonding symmetry.
 Section~4 is devoted to the four-dimensional case. We prove resolvent-related properties of resonances, which will be used in the proof of the main result.
 In Section~5 we prove the main results by combining the auxiliary statements from the previous sections.
\section{Notation and main result}
We consider a system of three particles with masses $m_1,m_2,m_3>0$ and corresponding position vectors $x_1,x_2,x_3\in \mathbb{R}^d, \ d \geq 4$.
The Hamiltonian of such a system in coordinate representation is given by
\begin{equation*}
-\frac{1}{2m_1}\Delta_{x_1}-\frac{1}{2m_2}\Delta_{x_2}-\frac{1}{2m_3}\Delta_{x_3}+v_{12}(x_{12})+v_{23}(x_{23})+v_{31}(x_{31}),
\end{equation*}
where $x_{ij}=x_i-x_j,\ i,j=1,2,3$, and $\Delta_{x_i}$ denotes the Laplacian with respect to coordinate $x_i=(x_{i1},\ldots,x_{id}) \in \mathbb{R}^d$ of the i-th particle. The real valued potential $v_\alpha, \ \alpha \in \{12,23,31\}$ describes the interaction of the corresponding particles with masses $m_1$ and $m_2$ or $m_2$ and $m_3$ or $m_3$ and $m_1$, respectively.
We assume that the potentials $v_\alpha$ satisfy
\begin{equation}\label{eq: assumptions on potential 1}
v_\alpha \in L_{\mathrm{loc}}^d(\mathbb{R}^d) \qquad \qquad \text{and} \qquad \qquad |v_\alpha (x)| \leq C|x|^{-b}, \ \text{if} \quad |x|\geq \gamma
\end{equation}
for some constant $\gamma>0$ and $b>2$. After separation of the center of mass the Hamiltonian of relative motion can be written as
\begin{equation}\label{eq: Hamiltonian in all dimensions}
H=H_0+\sum_{\alpha} v_{\alpha},
\end{equation}
where $H_0$ denotes the free Hamiltonian of the system. The corresponding configuration space $R_0$ is a $3(d-1)-$dimensional subspace of $\mathbb{R}^{3d}$. Under assumptions \eqref{eq: assumptions on potential 1} on the potentials $v_\alpha$ the operator $H$ is essentially self-adjoint. Every two-body subsystem corresponding to the subscript $\alpha\in\{12,23,31\}$ is described in the center of mass frame by the Hamiltonian
\begin{equation}
h_\alpha=-\frac{1}{2m_\alpha}\Delta+v_\alpha
\end{equation}
in $L^2(\mathbb{R}^d)$, where $m_\alpha$ is the reduced mass. Denote 
\begin{equation}
\mu = \min_\alpha \inf \sigma(h_\alpha),
\end{equation}
then by the HVZ-Theorem one has
\begin{equation}
\sigma_{\mathrm{ess}}(H)=[\mu,\infty).
\end{equation}
The case $\mu<0$ in dimension three was studied earlier \cite{Zhislin1974} and can be adapted to dimension $d \geq 4$. Hence, we only consider the case $\mu=0$. Our main results are the following two Theorems.
\begin{thm}\label{main result}
For $d=4$ let $v_\alpha(x) \leq 0$ as well as
\begin{equation*}
|v_\alpha(x)|\leq C(1+|x|)^{-b}, \ b>4,
\end{equation*}
and for $d\geq 5$ let $v_\alpha$ satisfy $\eqref{eq: assumptions on potential 1}$. Then $\sigma_{\mathrm{disc}}(H)$ is finite.
\end{thm}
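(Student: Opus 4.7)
The plan is, following Sobolev's approach \cite{Sobolev}, to bound the counting function
\begin{equation*}
N(z)=\#\bigl(\sigma_{\mathrm{disc}}(H)\cap(-\infty,z)\bigr),\qquad z<0,
\end{equation*}
uniformly as $z\uparrow 0$ via the symmetrized Faddeev operator. For each pair $\alpha$ write $v_\alpha=-u_\alpha w_\alpha$ with $u_\alpha=|v_\alpha|^{1/2}$ (so $w_\alpha=u_\alpha$ when $v_\alpha\leq 0$), and let $R_\alpha(z)=(h_\alpha-z)^{-1}$, $R_0(z)=(H_0-z)^{-1}$. The standard Faddeev factorization yields a compact matrix operator $\mathcal{T}(z)=(\mathcal{T}_{\alpha\beta}(z))_{\alpha\ne\beta}$ built from blocks of the form $u_\alpha R_\alpha(z)\,v_\beta R_0(z)\,u_\beta$ with the property that the eigenvalues of $H$ below $z$ correspond, with multiplicities, to the eigenvalues of $\mathcal{T}(z)$ strictly above $1$. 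Hence the theorem reduces to showing
\begin{equation*}
\limsup_{z\uparrow 0}\,\#\bigl\{\lambda\in\sigma(\mathcal{T}(z)):\lambda>1\bigr\}<\infty.
\end{equation*}

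The key input is the low-energy expansion of $R_\alpha(z)$ coming from Sections~3 and~4: since $h_\alpha$ has a virtual level at $0$,
\begin{equation*}
R_\alpha(z)=c_\alpha(z)\,\Pi_\alpha+S_\alpha(z),\qquad z\uparrow 0,
\end{equation*}
where $\Pi_\alpha$ is a rank-one operator associated with the virtual-level function $\psi_\alpha$, $c_\alpha(z)$ is an explicit singular scalar factor and $S_\alpha(z)$ remains uniformly bounded in a suitable weighted norm. In dimension $d\geq 5$ one has $\psi_\alpha\in L^2(\mathbb{R}^d)$, so $\Pi_\alpha$ is an honest spectral projection of $h_\alpha$; in $d=4$ the function $\psi_\alpha$ decays only as $|x|^{-2}$ and sits at the borderline of $L^2$, so $\Pi_\alpha$ has to be interpreted between weighted $L^2$ spaces.

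For $d\geq 5$ the situation is then essentially reduced to the one considered in \cite{Jafaev76,Zhislin1974}: the virtual levels are genuine zero-energy eigenfunctions, and after subtracting the finite-rank singular contribution $c_\alpha(z)\,u_\alpha\Pi_\alpha\,v_\beta R_0(z)u_\beta$ from the corresponding entry, the remainder has a limit at $z=0$ of norm strictly less than $1$; only finitely many eigenvalues of $\mathcal{T}(z)$ can ever exceed $1$. For $d=4$ one splits $\mathcal{T}(z)=\mathcal{T}_{\mathrm{sing}}(z)+\mathcal{T}_{\mathrm{reg}}(z)$ analogously and exploits the extra hypotheses as follows: the sharpened decay $b>4$ makes $u_\alpha\psi_\alpha$ and the off-diagonal channel couplings integrable enough for $\mathcal{T}_{\mathrm{sing}}(z)$ to have rank uniformly bounded in $z$, while the sign condition $v_\alpha\leq 0$ makes $\mathcal{T}(z)$ self-adjoint and permits a quadratic-form estimate yielding $\limsup_{z\uparrow 0}\|\mathcal{T}_{\mathrm{reg}}(z)\|<1$.

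The main obstacle is precisely this $d=4$ step. The borderline decay $\psi_\alpha\sim|x|^{-2}$ puts all relevant weighted $L^2$ norms on the edge of integrability, which is exactly the mechanism that produces the Efimov effect in $d=3$ and that is predicted to fail in $d=4$; controlling the resulting logarithmic losses in the Faddeev kernels requires both the strengthened decay rate $b>4$ and the nonpositivity of $v_\alpha$, and this is what Section~4 is devoted to. Once those estimates are in hand, the Birman--Schwinger principle applied to $\mathcal{T}(z)$ delivers the uniform bound on $N(z)$ and hence the finiteness of $\sigma_{\mathrm{disc}}(H)$ in all dimensions $d\geq 4$.
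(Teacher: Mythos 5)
Your overall framework for $d=4$ (symmetrized Faddeev equations plus the Birman--Schwinger identity $N(z)=n(1,A(z))$ and a low-energy expansion of the two-body resolvent) is the one the paper uses, but the two steps that actually carry the proof are misidentified. First, the ``singular'' part of the Faddeev matrix is \emph{not} of uniformly bounded rank: the projection onto the resonance state acts only in the internal variable $x_\alpha$ and is tensored with the identity in the conjugate spectator momentum $p_\alpha$, so $\Pi_\alpha\Gamma_\alpha(z)^{-1}K_{\alpha\beta}(z)\Gamma_\beta(z)^{-1}\Pi_\beta$ is an infinite-rank operator whose compactness at $z=0$ is exactly what must be proved; if this term really had bounded rank, the same reasoning would exclude the Efimov effect in $d=3$, where this very term fails to be compact. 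Second, the criterion you invoke, $\limsup_{z\uparrow0}\|\mathcal{T}_{\mathrm{reg}}(z)\|<1$, is neither established by your sketch nor the relevant one; what is needed (and proved) is compactness of $A(z)$ uniformly up to $z=0$. The decisive quantitative fact, which your proposal never states, is that in $d=4$ the resonance singularity is $w_\alpha(z)\sim\bigl(z(\ln|z|-\tau_\alpha)\bigr)^{-1}$ rather than $|z|^{-1/2}$ as in $d=3$ (Lemmas 4.4--4.5): after insertion into the Faddeev kernel the extra factors $|\ln|p_\alpha||^{-1/2}$, $|\ln|p_\beta||^{-1/2}$ convert the borderline-divergent radial integral $\int_0^\mu\rho^{-1}\,\mathrm{d}\rho$ into the convergent $\int_0^\mu\rho^{-1}(\ln\rho)^{-2}\,\mathrm{d}\rho$ (Lemma 4.7). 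This logarithmic gain, not a norm bound on a regular remainder, is why the effect disappears in $d=4$; the hypotheses $v_\alpha\le0$ and $b>4$ enter only to set up the symmetrized Faddeev system and to control the remainder $G_\alpha^{(\delta)}(z)$ in the expansion of $|v_\alpha|^{1/2}r_0(z)|v_\alpha|^{1/2}$.

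For $d\ge5$ your route also diverges from the paper and leaves a gap: you keep the Faddeev machinery and assert without justification that the regularized remainder has norm strictly below $1$ at $z=0$. The paper instead observes (Lemma 3.5) that for $d\ge5$ the virtual level is automatically a genuine $L^2$ zero-energy eigenfunction, and then finiteness of $\sigma_{\mathrm{disc}}(H)$ follows from the variational argument of Vugalter--Zhislin: Corollary 3.4 gives $\langle(-\Delta+v)g,g\rangle\ge\mu\||\nabla g|\|^2$ on the $\dot H^1$-orthogonal complement of the zero-eigenfunction, which yields a finite-dimensional subspace $M$ with $\langle H\psi,\psi\rangle\ge0$ for all $\psi\perp M$. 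This is why the weaker hypothesis \eqref{eq: assumptions on potential 1} suffices there. Finally, a complete proof must also treat the remaining cases in $d=4$ --- a subsystem with no virtual level (where $w_\alpha(z)$ is continuous up to $z=0$) and a subsystem where $0$ is an eigenvalue (handled again variationally) --- which your proposal does not separate out.
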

\begin{rem}
In dimension $d=4$ we analyse the spectrum of the three-body Hamiltonian by the use of Faddeev equations following \cite{Sobolev}, which require more restrictions on the potentials. In case of $d\geq 5$ we make use of a variational type of argument following \cite{Zhislin}, which allows us to have more general assumptions on the potentials, namely \eqref{eq: assumptions on potential 1}.
In case of three identical particles the potentials $v_\alpha$ satisfy $v_\alpha(x_{ij})=v_\alpha(-x_{ij})$ and the operator $H$ is invariant under the action of the group $S_3$ of permutatation of particles. Denote by $\sigma_1,\sigma_2$ and $\sigma_3$ the three irreducible representations of $S_3$, where $\sigma_1$ is the symmetric representation, $\sigma_2$ the antisymmetric representation and $\sigma_3$ the two-dimensional irreducible representation, respectively. Denote by $P^{\sigma_i}, \ i=1,2,3$ the corresponding projection. Symmetries of types $\sigma_1$ and $\sigma_3$ do not put any restrictions on the symmetry of the two-particle subsystems. In case of $\sigma_2$ we denote the two-body Hamiltonians on $P^{\sigma_2}L^2$ by $h_\alpha^{\mathrm{as}}$ and the corresponding three-body Hamiltonian by $H^\mathrm{as}$. In dimensions $d\in \{2,3,4\}$ the operator $h_\alpha^{\mathrm{as}}$ has different properties of so called virtual levels (see later for a precise definition), which itself determines the existence or non-existence of the Efimov effect.
\end{rem}
\begin{thm}\label{main result 2}
For $d\geq 4$ let the potentials $v_\alpha$ satisfy \eqref{eq: assumptions on potential 1} and $v_\alpha(x)=v_\alpha(-x)$. Then $\sigma_{\mathrm{disc}}(H^{\mathrm{as}})$ is finite.
\end{thm}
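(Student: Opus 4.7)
The plan is to reduce Theorem~\ref{main result 2} to a situation in which every virtual level of the two-body operator $h_\alpha^{\mathrm{as}}$ is in fact an $L^2$-eigenvalue at zero, not a merely resonance-type (non-$L^2$) solution. The Efimov mechanism in three bodies is driven precisely by the slow decay of the two-body resonance function, so once this pathology is excluded at the two-body level the discrete spectrum of $H^{\mathrm{as}}$ must be finite by arguments parallel to those for Theorem~\ref{main result}, in the spirit of \cite{Sem,Zhislin}.

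The first step is to classify zero-energy solutions of $h_\alpha^{\mathrm{as}} u = 0$ in the antisymmetric sector. Expanding $u$ in spherical harmonics and using $u(-x) = -u(x)$, only odd angular momenta $l = 1, 3, 5, \ldots$ contribute, since $Y_l^m$ has parity $(-1)^l$. For each such $l$, the decaying solution of the free radial equation at zero energy behaves like $r^{-(l+d-2)}$ at infinity, hence at worst like $r^{-(d-1)}$. Since
\[
\int_1^\infty r^{-2(d-1)} \cdot r^{d-1} \, dr = \int_1^\infty r^{-(d-1)} \, dr < \infty \qquad \text{for } d \geq 3,
\]
this leading-order decay places $u$ in $L^2(\mathbb{R}^d)$ for every $d \geq 4$. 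To turn this into a rigorous statement for the full perturbed equation, I would then run a fixed-point argument on $u = -G_0(v_\alpha u)$, where $G_0$ is the free resolvent at zero restricted to the odd-parity subspace, combined with the short-range assumption $|v_\alpha(x)| \leq C|x|^{-b}$, $b > 2$, to propagate $L^2$-membership from the leading asymptotics to the full solution. The conclusion is that $h_\alpha^{\mathrm{as}}$ admits no non-$L^2$ resonance at zero: any virtual level is a genuine eigenvalue.

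With this dichotomy established, finiteness of $\sigma_{\mathrm{disc}}(H^{\mathrm{as}})$ follows by a Vugalter--Zhislin type variational argument \cite{Sem,Zhislin}. One constructs a finite-dimensional test subspace built from cluster wavefunctions of the two-body $L^2$ zero-energy eigenfunctions, applies an IMS-type partition of unity together with Dirichlet--Neumann bracketing, and shows that the quadratic form of $H^{\mathrm{as}}$ is bounded below by zero on a subspace of finite codimension. The min-max principle then yields only finitely many negative eigenvalues. Alternatively, once virtual levels are all genuine eigenvalues, the classical Yafaev \cite{Jafaev76} finite-rank perturbation argument applies and gives the same conclusion.

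The main obstacle is the rigorous version of the first step in the borderline dimension $d = 4$, where even the unrestricted two-body resonance decays like $|x|^{-2}$ and is just barely not in $L^2$: one must verify that the single gain in radial power coming from forcing $l \geq 1$ is uniform over all odd angular momenta and survives the perturbation by $v_\alpha$. This requires a careful analysis of the free resolvent kernel restricted to odd-parity functions, presumably building on the resolvent-related properties of resonances developed in Sections~3 and~4 of the paper. Once this is in place, the variational finiteness argument in the final step proceeds along well-established lines.
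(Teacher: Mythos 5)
Your overall strategy coincides with the paper's: show that in the antisymmetric sector every virtual level of $h_\alpha^{\mathrm{as}}$ is a genuine $L^2$-eigenvalue of finite multiplicity, and then conclude finiteness of $\sigma_{\mathrm{disc}}(H^{\mathrm{as}})$ by the Vugalter--Zhislin variational argument \cite[Theorem 2.1]{Zhislin}. The difference is in how the first step is executed, and here your route has a gap. You propose decomposing into odd angular-momentum channels and analysing the decaying solution $r^{-(l+d-2)}$ of the free radial equation channel by channel; this presupposes that $h_\alpha^{\mathrm{as}}$ commutes with rotations, i.e.\ that $v_\alpha$ is spherically symmetric, whereas the theorem only assumes $v_\alpha(x)=v_\alpha(-x)$, so the channels do not decouple and the ``uniformity over all odd $l$'' issue you flag never even gets a clean formulation. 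The paper reaches the same conclusion without any multipole decomposition: for $d\geq 5$, Lemma~\ref{lem: behaviour of the resonance state} already gives $f\in L^2(\mathbb{R}^d)$ unconditionally, and for the borderline case $d=4$, Lemma~\ref{lem: <v,f>=0} gives
\[
f(x)=-\frac{m}{2\pi^2}\,\frac{\langle v_\alpha,f\rangle}{|x|^{2}}+\widetilde f(x),\qquad \widetilde f\in L^2(\mathbb{R}^4),
\]
so the single observation $\langle v_\alpha,f\rangle=0$ --- the integrand is odd, being the product of the even $v_\alpha$ and the antisymmetric $f$ --- annihilates the non-$L^2$ term. This is exactly the ``one extra power of decay from parity'' you are after, but it is extracted from the zeroth moment of $v_\alpha f$ in the convolution with the Green's function rather than from a spherical-harmonics expansion, and it already contains the bootstrap you leave as a plan (that bootstrap is the content of Lemmas~\ref{lem: vf in L^2}--\ref{lem: <v,f>=0}). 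If you replace your channel analysis by this moment argument, your proof closes; the concluding variational step, and your remark that a Yafaev-type argument would serve as an alternative once zero is an eigenvalue, match the paper.
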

\begin{rem}
Consequently, the super-Efimov effect does not exist in dimension $d\geq 4$.
\end{rem}
\section{Virtual levels of two-body subsystems}
The finiteness of the discrete spectrum of the three-body Hamiltonian depends on the existence and properties of resonances in the two-body subsystems, which appear in the frame of critical potentials and are sometimes called virtual levels of the Hamiltonian. We introduce the concept of virtual levels following \cite{Ja2}. For the sake of brevity we omit the subscript $\alpha$ in this section and add it for every corresponding expression only to distinguish between different subsystems, i.e. we consider the Schr\"odinger operator
\begin{equation*}
h= -\frac{1}{2m}\Delta +v.
\end{equation*}
This operator is given in the center of mass frame, acting in $L^2(\mathbb{R}^d), \ d\in \mathbb{N}$, where $m>0$ is the reduced mass and $v$ is the multiplication by a real valued function satisfying \eqref{eq: assumptions on potential 1}.
\begin{df}
The operator $h=-\frac{1}{2m}\Delta+v$ has a virtual level at the bottom of its spectrum, if
\begin{equation*}
h \geq 0 \qquad \qquad \text{and} \qquad \qquad \sigma_{\mathrm{disc}}\left(-\frac{1}{2m}\Delta +(1+\varepsilon)v\right) \not = \emptyset
\end{equation*}
holds for every $\varepsilon>0$.
\end{df}
The existence of a virtual level of $h$ is connected to the following homogeneous Sobolev-space
\begin{equation*}
\dot{H}^1(\mathbb{R}^d)= \overline{C_0^\infty(\mathbb{R}^d)}^{\Vert \cdot \Vert_{(1)}},\qquad \Vert f \Vert_{(1)}= \left(\int_{\mathbb{R}^d} |\nabla f(x)|^2 \, \mathrm{d}x \right)^{\frac{1}{2}}.
\end{equation*}
\begin{thm}\label{lem: virtual level implies rsonance}
Let $d\geq 4$. If the operator $h$ has a virtual level, then there exists a positive function $f \in \dot{H}^1(\mathbb{R}^d), f\not = 0$, such that 
\begin{equation}\label{eq: resonance or eigenfunction}
\left(-\frac{1}{2m}\Delta +v \right)f=0
\end{equation}
in the distributional sense.
\end{thm}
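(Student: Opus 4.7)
The plan is to obtain $f$ as a weak $\dot H^1$-limit of renormalized ground states of the perturbed operators $h_\varepsilon := -\frac{1}{2m}\Delta + (1+\varepsilon)v$ provided by the virtual-level hypothesis. For each small $\varepsilon>0$, let $-\lambda_\varepsilon<0$ denote the ground state energy of $h_\varepsilon$ and $f_\varepsilon$ the corresponding eigenfunction, chosen positive by Perron--Frobenius and normalized by $\|\nabla f_\varepsilon\|_{L^2}=1$. Since $d\geq 4$, the Sobolev embedding $\dot H^1(\mathbb R^d)\hookrightarrow L^{2^*}(\mathbb R^d)$ with $2^*=\tfrac{2d}{d-2}$ yields $\|f_\varepsilon\|_{L^{2^*}}\leq C$ uniformly, and Rellich--Kondrachov furnishes a subsequence such that $f_\varepsilon\rightharpoonup f$ in $\dot H^1$, $f_\varepsilon\to f$ in $L^q_{\mathrm{loc}}$ for every $q<2^*$, and $f_\varepsilon\to f$ pointwise a.e.

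Testing $h_\varepsilon f_\varepsilon=-\lambda_\varepsilon f_\varepsilon$ against $f_\varepsilon$ and combining with $h\geq 0$ sandwiches the potential energy,
\[
-\tfrac{1}{2m}\;\leq\;\langle v f_\varepsilon,f_\varepsilon\rangle\;\leq\;-\tfrac{1}{2m(1+\varepsilon)},
\]
and gives $\lambda_\varepsilon\|f_\varepsilon\|_{L^2}^2\leq\varepsilon/(2m)$.

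The step I expect to be hardest is proving $f\not\equiv 0$. Assume the contrary, and split $\langle v f_\varepsilon,f_\varepsilon\rangle$ at a large radius $R$. For the tail, $|v(x)|\leq C|x|^{-b}$ with $b>2$ together with H\"older using exponents $(d/2,2^*/2)$ and the uniform Sobolev control give
\[
\left|\int_{|x|>R} v f_\varepsilon^2\,dx\right|\leq C\Bigl(\int_{|x|>R}|x|^{-bd/2}\,dx\Bigr)^{2/d},
\]
which vanishes as $R\to\infty$ uniformly in $\varepsilon$. For the interior, $v\in L^d_{\mathrm{loc}}$ and H\"older with $q=\tfrac{2d}{d-1}<2^*$ yield
\[
\left|\int_{|x|\leq R} v f_\varepsilon^2\,dx\right|\leq\|v\|_{L^d(B_R)}\|f_\varepsilon\|_{L^q(B_R)}^2,
\]
and the last factor tends to $0$ by Rellich--Kondrachov when $f\equiv 0$. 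These estimates contradict the lower bound $|\langle v f_\varepsilon,f_\varepsilon\rangle|\geq\tfrac{1}{4m}$ from the sandwich, so $f\not\equiv 0$.

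To conclude, pass to the limit in the weak formulation
\[
\tfrac{1}{2m}\langle\nabla f_\varepsilon,\nabla\phi\rangle+(1+\varepsilon)\langle v f_\varepsilon,\phi\rangle=-\lambda_\varepsilon\langle f_\varepsilon,\phi\rangle,\qquad\phi\in C_0^\infty(\mathbb R^d).
\]
Weak convergence in $\dot H^1$ handles the gradient term; H\"older with $v\in L^d_{\mathrm{loc}}$ and strong $L^q_{\mathrm{loc}}$-convergence handles the potential term; and the right-hand side vanishes because norm-resolvent convergence $h_\varepsilon\to h$ (ensured by the relative $-\Delta$-compactness of $v$) forces $\lambda_\varepsilon\to 0$, while $|\langle f_\varepsilon,\phi\rangle|$ stays bounded. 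Pointwise a.e.\ convergence of the positive functions $f_\varepsilon$ finally gives $f\geq 0$, completing the proof.
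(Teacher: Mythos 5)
Your argument is correct, and it reaches the same destination as the paper by a genuinely different construction of the approximating sequence. The paper does not work with the global ground states of $h_\varepsilon=-\tfrac{1}{2m}\Delta+(1+\varepsilon)v$ at all: instead it restricts $h$ and $h_{1/n}$ to balls $B_k$ with Dirichlet conditions and exploits monotonicity of the lowest Dirichlet eigenvalue in $k$ to locate, for each $n$, a domain $A_n$ on which $h_{1/n}$ has an \emph{exact} zero-energy eigenfunction $u_n\in H_0^1(A_n)$; these are extended by zero, normalized by $\Vert\nabla u_n\Vert_{L^2}=1$, and the weak $\dot H^1$-limit is taken. The payoff of that trick is that there is no eigenvalue term to control when passing to the limit in the weak formulation, whereas you must show $\lambda_\varepsilon\langle f_\varepsilon,\phi\rangle\to 0$. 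The payoff of your route is that it uses only what the definition of a virtual level literally provides (nonempty discrete spectrum of $h_\varepsilon$), avoiding the domain-truncation argument and the slightly delicate claim that a sign change of the Dirichlet ground-state energy between $B_{k_n}$ and $B_{k_n+1}$ produces an intermediate region with a zero mode. Both proofs hinge on the same compactness mechanism to rule out $f\equiv 0$: the paper proves that $u\mapsto|v|^{1/2}u$ is compact from $\dot H^1(\mathbb{R}^d)$ to $L^2(\mathbb{R}^d)$ (tail estimate from $|v(x)|\leq C|x|^{-b}$, $b>2$, plus local compactness from $v\in L^d_{\mathrm{loc}}$) and deduces $\int|v||f|^2\,\mathrm{d}x\neq 0$; your splitting of $\langle vf_\varepsilon,f_\varepsilon\rangle$ at radius $R$ is exactly this compactness unwound by hand, combined with the quantitative sandwich $-\tfrac{1}{2m}\leq\langle vf_\varepsilon,f_\varepsilon\rangle\leq-\tfrac{1}{2m(1+\varepsilon)}$, which is a nice touch the paper does not make explicit.

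One step you should tighten: you invoke norm-resolvent convergence to get $\lambda_\varepsilon\to 0$, but for that (or for any argument killing the right-hand side) you need $h_\varepsilon$ uniformly bounded below, which does not follow merely from $h\geq 0$ since $h_\varepsilon=h+\varepsilon v$ and $v$ need not be bounded. The fix is immediate from your own toolbox: the compactness (hence infinitesimal form-boundedness) of $|v|^{1/2}:\dot H^1\to L^2$ gives $\langle|v|\psi,\psi\rangle\leq\eta\Vert\nabla\psi\Vert^2+C_\eta\Vert\psi\Vert^2$, so $\lambda_\varepsilon\leq C$ uniformly for $\varepsilon\leq 1$, and then your estimate $\lambda_\varepsilon\Vert f_\varepsilon\Vert_{L^2}^2\leq\varepsilon/(2m)$ yields $\lambda_\varepsilon|\langle f_\varepsilon,\phi\rangle|\leq\sqrt{\lambda_\varepsilon}\,\sqrt{\varepsilon/(2m)}\,\Vert\phi\Vert_{L^2}\to 0$ directly, with no resolvent convergence needed. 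With that patch the proof is complete; the residual issue of strict positivity of the limit (you only get $f\geq 0$, $f\not\equiv 0$ from a.e.\ convergence) is present to exactly the same degree in the paper's own proof.
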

\begin{rem}
If $f\not \in L^2(\mathbb{R}^d)$, then $\lambda=0$ is called resonance of $h$ with the corresponding resonance state $f$.
\end{rem}
\begin{proof}
Let $k,n \in \mathbb{N}$, $B_k=\{x\in \mathbb{R}^d :\ |x| \leq k \}$ and
\begin{equation*}
h_0^k=-\frac{1}{2m} \Delta +v, \qquad \qquad h_n^k=-\frac{1}{2m}\Delta + \left(1+\frac{1}{n} \right)v,
\end{equation*}
considered as operators in $L^2(B_k)$ with form domains $H_0^1(B_k)$. Denote by $\lambda_0^k$ and $ \lambda_n^k$ the smallest Dirichlet-eigenvalue of the operator $h_0^k$ and $h_n^k$, respectively. Since
every domain $B_k$ is bounded we have $\lambda_0^k >0$ and $\lambda_n^k <0$ for every $n \in \mathbb{N}$ and sufficiently large $k\in \mathbb{N}$ due to the existence of the virtual level. For every sufficiently large $n \in \mathbb{N}$ we can find a $k_n \in \mathbb{N}$, such that $\lambda_n^{k_n} \geq 0$ and $\lambda_n^{k_n+1} <0$, which implies the existence of a region $A_n\subset \mathbb{R}^d$ and zero-eigenfunctions $u_n \in H_0^1(A_n)$. We normalize $u_n$ by
\begin{equation}\label{eq: normalization un}
\Vert u_n \Vert_{(1)}=\int_{A_n} |\nabla u_n(x)|^2 \, \mathrm{d}x = 1.
\end{equation}
Let
\begin{equation*}
f_n :\mathbb{R}^d \rightarrow \mathbb{R}, \ \ f_n(x)= 
\begin{cases} u_n(x), & x \in A_n
\\ 0, & x \in \mathbb{R}^d\backslash A_n
\end{cases}
\end{equation*}
Then we have $f_n \in H^1(\mathbb{R}^d)$ and $\Vert f_n\Vert_{(1)}=1$.
Due to \eqref{eq: normalization un} there exists a subsequence (also denoted by $f_n$) and a function $f$, such that $f_n \rightharpoonup f$ in $\dot{H}^1(\mathbb{R}^d)$, i.e.
\begin{equation*}
\int_{\mathbb{R}^d} \nabla f_n \cdot \overline{\nabla \varphi} \, \mathrm{d}x \overset{n \rightarrow \infty}{\longrightarrow} \int_{\mathbb{R}^d} \nabla f \cdot \overline{\nabla \varphi} \, \mathrm{d}x
\end{equation*}
holds for every $\varphi \in C_0^\infty(\mathbb{R}^d)$. Note that by assumptions \eqref{eq: assumptions on potential 1} the following operator is well defined.
\begin{equation*}
T_v: \dot{H}^1(\mathbb{R}^d) \rightarrow L^2(\mathbb{R}^d), \ \ \left( T_v u \right)(x)= |v(x)|^\frac{1}{2}u(x).
\end{equation*}
$T_v$ is compact, since for every $R>0$ the operator
\begin{equation*}
T_v^R : \dot{H}^1(\mathbb{R}^d) \rightarrow L^2(\mathbb{R}^d), \ \ \left( T_v^R u \right)(x)=|v(x)|^\frac{1}{2}u(x) \chi_{R}(x) 
\end{equation*}
is obviously compact and for sufficiently large $R>0$ we have
\begin{align*}
\Vert (T_v-T_v^R)u \Vert_{L^2(\mathbb{R}^d)}^2 &= \int_{\left\{|x|>R\right\}} |v(x)||u(x)|^2\, \mathrm{d}x = \int_{\left\{|x|>R\right\}} |v(x)||x|^2 \frac{|u(x)|^2}{|x|^2} \, \mathrm{d}x
\\ & \leq C\sup_{|y|>R} |y|^{2-b} \int_{\left\{|x|>R\right\}} |\nabla u(x)|^2\, \mathrm{d}x \overset{R\rightarrow \infty}{\longrightarrow} 0.
\end{align*}
Due to the compactness of $T_v$ the sequence $(T_v f_n)$ converges in $L^2(\mathbb{R}^d)$ and therefore
\begin{equation}\label{eq: f_n onverge to f}
\left(1+\frac{1}{n}\right) \int_{\mathbb{R}^d} v(x) f_n(x) \overline{\varphi(x)}\, \mathrm{d}x \longrightarrow \int_{\mathbb{R}^d} v(x) f(x) \overline{\varphi(x)}\, \mathrm{d}x
\end{equation}
holds for every $\varphi \in C_0^\infty(\mathbb{R}^d)$. Note that
\begin{align*}
\int_{\mathbb{R}^d} vf(f-\varphi)\, \mathrm{d}x-\int_{\mathbb{R}^d} v|f|^2\, \mathrm{d}x&=-\int_{\mathbb{R}^d} vf\varphi\, \mathrm{d}x
\\ &=\int_{\mathbb{R}^d} \nabla f\cdot \overline{\nabla \varphi}\, \mathrm{d}x= \int_{\mathbb{R}^d} |\nabla f |^2\, \mathrm{d}x - \int_{\mathbb{R}^d} \nabla f \cdot \left( \nabla f - \overline{\nabla \varphi}\right) \, \mathrm{d}x.
\end{align*}
Hence, by approximating $f$ with functions $\varphi_k \in C_0^\infty(\mathbb{R}^d)$, together with $\Vert |v|^\frac{1}{2} f\Vert \leq C \Vert |\nabla f|\Vert$ and $\Vert f \Vert_{(1)}=1$ we have $\int |v| |f|^2\, \mathrm{d}x =1$, which implies $f \not = 0$. Therefore $f \in \dot{H}^1(\mathbb{R}^d)$
is a positive solution of 
\begin{equation}\label{eq: a}
\left(-\frac{1}{2m}\Delta + v \right)f=0.
\end{equation}
This completes the proof.
\end{proof}
\begin{rem}
If the operator $h$ is considered without symmetry restrictions, then the solution $f$ is non-degenerate. Therefore, the proof of Theorem \ref{lem: virtual level implies rsonance} yields the following Corollary.
\end{rem}
\begin{cor}\label{Cor: proof}
There exists a constant $\mu>0$, such that for every function $g \in \dot{H}^1(\mathbb{R}^d)\backslash\{0\}$ with $\langle \nabla g,\nabla f\rangle=0$ one has
\begin{equation}\label{cor: estimate for g}
\langle (-\Delta+v)g,g\rangle \geq \mu \Vert |\nabla g| \Vert^2.
\end{equation}
\end{cor}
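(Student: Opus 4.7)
The plan is to argue by contradiction, using compactness of $T_v$ from the proof of Theorem~\ref{lem: virtual level implies rsonance} together with the non-degeneracy asserted in the preceding remark. I would set $q(g) := \langle h g, g\rangle = \tfrac{1}{2m}\Vert \nabla g\Vert^2 + \int_{\mathbb{R}^d} v(x)|g(x)|^2 \, \mathrm{d}x$ on $\dot{H}^1(\mathbb{R}^d)$; the virtual-level hypothesis gives $q \geq 0$, and Theorem~\ref{lem: virtual level implies rsonance} gives $q(f) = 0$. If the asserted estimate fails (with $\mu/(2m)$ rescaling the constant in front of $\Vert|\nabla g|\Vert^2$ in the statement), then there exists a sequence $(g_n) \subset \dot{H}^1(\mathbb{R}^d)$ with $\langle \nabla g_n, \nabla f\rangle = 0$, $\Vert \nabla g_n\Vert = 1$, and $q(g_n) \to 0$.

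Passing to a subsequence, $g_n \rightharpoonup g$ weakly in $\dot{H}^1(\mathbb{R}^d)$. By compactness of $T_v$, $|v|^{1/2} g_n \to |v|^{1/2} g$ strongly in $L^2(\mathbb{R}^d)$; multiplying by the bounded function $\mathrm{sign}(v)$ then yields $\int v|g_n|^2 \, \mathrm{d}x \to \int v|g|^2 \, \mathrm{d}x$, and combined with $q(g_n) \to 0$ this forces $\tfrac{1}{2m} + \int v|g|^2 \, \mathrm{d}x = 0$. Weak lower semicontinuity gives $\Vert \nabla g\Vert \leq 1$, hence $q(g) \leq 0$; together with $q \geq 0$ this forces $q(g) = 0$ and $\Vert \nabla g\Vert = 1$. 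Weak convergence combined with norm convergence in the Hilbert space $\dot{H}^1(\mathbb{R}^d)$ upgrades to strong convergence $g_n \to g$, so in particular $\langle \nabla g, \nabla f\rangle = 0$.

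Since $g$ is a nontrivial minimiser of $q$ at level $0$, perturbing with $g + t\varphi$ for $\varphi \in C_0^\infty(\mathbb{R}^d)$ and differentiating at $t = 0$ yields $hg = 0$ in the distributional sense. Invoking the non-degeneracy from the remark preceding the corollary, $g = cf$ for some scalar $c$; the orthogonality then forces $c\Vert \nabla f\Vert^2 = 0$, and since $\Vert \nabla f\Vert > 0$ one obtains $c = 0$, contradicting $\Vert \nabla g\Vert = 1$.

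The main obstacle will be justifying the strong convergence $g_n \to g$, for which both halves of the virtual-level hypothesis are needed (the nonnegativity of $q$ to rule out $\Vert\nabla g\Vert < 1$, and the existence of a zero-energy solution in $\dot{H}^1$ to have a candidate direction $f$ to begin with), in conjunction with the compactness of $T_v$. Once strong convergence is secured, the non-degeneracy of $f$ closes the argument immediately.
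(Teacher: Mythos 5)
Your argument is correct and is essentially the proof the paper intends: the paper gives no explicit argument, stating only that the corollary follows from the proof of Theorem~\ref{lem: virtual level implies rsonance} together with the non-degeneracy of $f$, and your contradiction argument (minimising sequence, weak convergence in $\dot{H}^1$, compactness of $T_v$, Euler--Lagrange equation, then non-degeneracy) supplies exactly the missing details using exactly those ingredients. The only cosmetic point is the harmless $2m$-rescaling between $\langle(-\Delta+v)g,g\rangle$ and the form of $h$, which you already flag.
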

\begin{lem}\label{lem: vf in L^2}
Let $d\geq 4$ and let $v$ satisfy assumptions \eqref{eq: assumptions on potential 1}. Then for every $\psi\in \dot{H}^1(\mathbb{R}^d)$ one has $v\psi\in L^2(\mathbb{R}^d)$.
\end{lem}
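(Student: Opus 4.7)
The strategy is to split the integral $\int_{\mathbb{R}^d}|v(x)|^2|\psi(x)|^2\,\mathrm{d}x$ into a near region $\{|x|\leq\gamma\}$ and a far region $\{|x|>\gamma\}$, and to treat each using the two distinct hypotheses in \eqref{eq: assumptions on potential 1}.

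For the far region I would use the decay $|v(x)|\leq C|x|^{-b}$ with $b>2$. Multiplying and dividing by $|x|^2$ one gets
\begin{equation*}
\int_{|x|>\gamma}|v(x)|^2|\psi(x)|^2\,\mathrm{d}x
=\int_{|x|>\gamma}\bigl(|v(x)|^2|x|^2\bigr)\,\frac{|\psi(x)|^2}{|x|^2}\,\mathrm{d}x
\leq C^2\gamma^{2-2b}\int_{\mathbb{R}^d}\frac{|\psi(x)|^2}{|x|^2}\,\mathrm{d}x,
\end{equation*}
since $|v(x)|^2|x|^2\leq C^2|x|^{2-2b}$ is bounded on $\{|x|>\gamma\}$ because $b>2$. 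Hardy's inequality (valid since $d\geq 4\geq 3$) then dominates the last integral by a constant times $\int|\nabla\psi|^2\,\mathrm{d}x=\|\psi\|_{(1)}^2<\infty$.

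For the near region I would exploit $v\in L^d_{\mathrm{loc}}(\mathbb{R}^d)$ combined with the Sobolev embedding $\dot{H}^1(\mathbb{R}^d)\hookrightarrow L^{2d/(d-2)}(\mathbb{R}^d)$ (which holds for $d\geq 3$). By Hölder's inequality with conjugate exponents $d/2$ and $d/(d-2)$,
\begin{equation*}
\int_{|x|\leq\gamma}|v(x)|^2|\psi(x)|^2\,\mathrm{d}x
\leq \Bigl(\int_{|x|\leq\gamma}|v|^d\,\mathrm{d}x\Bigr)^{2/d}
\Bigl(\int_{|x|\leq\gamma}|\psi|^{2d/(d-2)}\,\mathrm{d}x\Bigr)^{(d-2)/d},
\end{equation*}
and both factors are finite, the first by $v\in L^d_{\mathrm{loc}}$ and the second by the Sobolev embedding applied to $\psi$.

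Adding the two estimates gives $v\psi\in L^2(\mathbb{R}^d)$. There is no real obstacle here; the only point to be slightly careful about is ensuring that the Sobolev embedding is applied to the homogeneous space $\dot{H}^1(\mathbb{R}^d)$ rather than $H^1(\mathbb{R}^d)$, which is standard for $d\geq 3$ via the Gagliardo--Nirenberg--Sobolev inequality on $C_0^\infty(\mathbb{R}^d)$ and density.
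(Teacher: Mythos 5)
Your proposal is correct and follows essentially the same route as the paper: the near region is handled by H\"older's inequality with exponents $d/2$ and $d/(d-2)$ together with the Sobolev embedding of $\dot{H}^1(\mathbb{R}^d)$, and the far region by the pointwise decay of $v$ combined with Hardy's inequality. No further comment is needed.
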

\begin{proof}
We consider
\begin{equation*}
\int_{\mathbb{R}^d} |v(x)\psi(x)|^2\, \mathrm{d}x = \int_{\left\{|x|<\gamma\right\}} |v(x)\psi(x)|^2\, \mathrm{d}x + \int_{\left\{|x|\geq\gamma\right\}} |v(x)\psi(x)|^2\, \mathrm{d}x,
\end{equation*}
where $\gamma>0$ is the constant in \eqref{eq: assumptions on potential 1}. Since $v \in L_{\mathrm{loc}}^d(\mathbb{R}^d)$ and $\psi\in \dot{H}^1(\mathbb{R}^d)$ we obtain by the use of the Sobolev inequality
\begin{equation*}
\int_{\left\{|x|<\gamma\right\}} |v(x)\psi(x)|^2\, \mathrm{d}x \leq \left(\int_{\left\{|x|<\gamma\right\}} |v(x)|^d \, \mathrm{d}x\right)^{\frac{2}{d}} \left( \int_{\left\{|x|<\gamma\right\}} |\psi(x)|^{\frac{2d}{d-2}}\, \mathrm{d}x  \right)^{\frac{d-2}{d}}\leq c\Vert \psi \Vert_{(1)}^{\frac{2(d-2)}{d}}.
\end{equation*}
By the second assumption of \eqref{eq: assumptions on potential 1} we can make use of Hardy's inequality to conclude
\begin{equation*}
\int_{\left\{|x|\geq\gamma\right\}} |v(x)\psi(x)|^2\, \mathrm{d}x \leq c\int_{\left\{|x|\geq\gamma\right\}} \frac{|\psi(x)|^2}{|x|^2}\, \mathrm{d}x \leq C\Vert \psi \Vert_{(1)}^{2}.
\end{equation*}
Hence, $v\psi \in L^2(\mathbb{R}^d)$, which completes proof.
\end{proof}
\begin{lem}\label{lem: behaviour of the resonance state}
If $d=4$, then the solution $f$ of \eqref{eq: resonance or eigenfunction} belongs to the space 
\begin{equation*}
L_{-s}^2(\mathbb{R}^4)=\{ \varphi: \mathbb{R}^4 \rightarrow \mathbb{R} \ |\ (1+|\cdot|)^{-s}\varphi \in L^2(\mathbb{R}^4) \}
\end{equation*}
for every $s>0$. If $d\geq 5$, then $f \in L^2(\mathbb{R}^d)$.
\end{lem}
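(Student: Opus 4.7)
The strategy is to turn the zero-energy equation $(-\Delta + 2mv)f = 0$ into an integral equation via the Newton kernel and derive pointwise decay for $f$ at infinity; the weighted $L^2$-membership then follows by direct integration.

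First, Lemma~\ref{lem: vf in L^2} gives $vf \in L^2(\mathbb{R}^d)$, so that $-\Delta f = -2m\,vf$ lies in $L^2$. Since $d \geq 4$, the fundamental solution of $-\Delta$ on $\mathbb{R}^d$ is $G(x) = c_d |x|^{2-d}$, and $f \in \dot{H}^1$ admits the Newton representation
\[
f(x) = -2m c_d \int_{\mathbb{R}^d} \frac{v(y) f(y)}{|x-y|^{d-2}}\, \mathrm{d}y.
\]
I would then establish the pointwise bound $|f(x)| \leq C(1+|x|)^{-(d-2)}$. For large $|x|$ the convolution is split into $\{|y| \leq |x|/2\}$, where the kernel contributes $|x|^{-(d-2)}$ and the remaining source integrates against $|v||f|$, and $\{|y| > |x|/2\}$, where one exploits the decay $|v(y)| \leq C|y|^{-b}$, the Hardy inequality $\||y|^{-1}f\|_{L^2} \leq C\|\nabla f\|_{L^2}$, and a bootstrap based on the Sobolev embedding $\dot{H}^1 \hookrightarrow L^{2d/(d-2)}$ to bound the far-field piece by the same rate $|x|^{-(d-2)}$.

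Once this pointwise bound is secured, the dichotomy between the two cases reduces to integration. If $d \geq 5$ then $2(d-2) > d$, so $\int (1+|x|)^{-2(d-2)}\, \mathrm{d}x < \infty$ and $f \in L^2(\mathbb{R}^d)$. If $d = 4$ the bound reads $|f(x)| \leq C(1+|x|)^{-2}$, which is the borderline case: $\int_{\mathbb{R}^4} (1+|x|)^{-2(2+s)}\, \mathrm{d}x$ converges precisely when $s > 0$, so $f \in L^2_{-s}(\mathbb{R}^4)$ for every $s > 0$ but not for $s = 0$.

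The main obstacle is controlling the near-field integral $\int_{|y| \leq |x|/2} |v(y) f(y)|\, \mathrm{d}y$ uniformly in $|x|$, that is, essentially showing $vf \in L^1(\mathbb{R}^d)$. In dimension $d = 4$ the stronger hypothesis $b > 4$ coming from the main theorem feeds into Cauchy--Schwarz together with Hardy to give $vf \in L^1$ directly. For $d \geq 5$ with only the minimal assumption $b > 2$, the same conclusion is not immediate and requires the Sobolev embedding of $\dot{H}^1$ combined with a Hölder estimate on $|v|$; this is precisely the point at which the hypotheses on $b$ and the borderline nature of the four-dimensional case enter the proof.
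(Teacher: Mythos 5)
Your reduction to the integral equation $f=G\ast(vf)$ and the final integration step are sound, and for $d=4$, where Theorem \ref{main result} supplies $b>4$, the route through a pointwise bound $|f(x)|\le C|x|^{-2}$ at infinity could be carried out. But there is a genuine gap for $d\ge 5$, where the lemma is asserted under the minimal hypothesis \eqref{eq: assumptions on potential 1} with only $b>2$. Your pointwise bound $|f(x)|\le C(1+|x|)^{-(d-2)}$ hinges on $vf\in L^1(\mathbb{R}^d)$ to control the near-field piece $\{|y|\le |x|/2\}$, and neither mechanism you invoke delivers this: Cauchy--Schwarz with Hardy needs $|x|\,|v|\in L^2$, i.e.\ $2(b-1)>d$, while H\"older against the Sobolev embedding $f\in L^{2d/(d-2)}$ needs $v\in L^{2d/(d+2)}$ at infinity, i.e.\ again $b>(d+2)/2$. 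For $d\ge 5$ and $b$ close to $2$ both conditions fail, and $vf\in L^1$ is not available at the outset --- the decay of $f$ required to make $vf$ integrable is essentially what is being proved. The far-field piece has the same self-consistency problem: with $|v(y)|\le C|y|^{-b}$ one gains only $b-2$ orders of decay per pass through the integral equation, so a single application cannot reach the rate $|x|^{-(d-2)}$, and the necessary iteration is not set up. Finally, a \emph{global} bound $|f(x)|\le C(1+|x|)^{-(d-2)}$ cannot hold under $v\in L^d_{\mathrm{loc}}$ alone, since $f$ need not be locally bounded; any such estimate must be restricted to large $|x|$.

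For comparison, the paper avoids pointwise estimates entirely. Starting from $f\in L^2_{-s_0}$ with $s_0\in(1,2)$ (a consequence of Hardy's inequality), it shows $vf\in L^q$ for a range of exponents $q<2$ by H\"older, splits the kernel into $|x-y|<2$ and $|x-y|\ge 2$, applies Young's convolution inequality to each piece, and then iterates in the weighted-$L^2$ scale, improving the weight by roughly $\theta=b-2$ at each step. This is the same bootstrap your argument implicitly requires, but executed in $L^p$ norms where only $vf\in L^q$ for some $q>1$ is needed --- exactly the information that survives when $b$ is close to $2$. If you want to keep the pointwise strategy, you must either strengthen the hypothesis on $b$ (defeating the purpose of the $d\ge 5$ case) or replace the one-shot decay claim by an explicit induction on the decay rate together with a separate treatment of local regularity.
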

\begin{proof}
At first let $s_0\in(1,2)$. Then we have
\begin{equation}\label{eq: f in L_s_0}
\int_{\mathbb{R}^d} \frac{|f(x)|^2}{(1+|x|)^{2s_0}} \, \mathrm{d}x = \int_{\mathbb{R}^d} \frac{|x|^2}{(1+|x|)^{2s_0}} \frac{|f(x)|^2}{|x|^2} \, \mathrm{d}x \leq C \int_{\mathbb{R}^d} |\nabla f(x)|^2 \, \mathrm{d}x = C \Vert f \Vert_{(1)}^2,
\end{equation}
which implies $f\in L_{-s_0}^2(\mathbb{R}^d)$. By Lemma \ref{lem: vf in L^2} and \cite{Lieb} we have
\begin{equation}\label{eq: integralequation for resonance}
f(x) = G \ast \left( v\cdot f \right)(x) = -\frac{2m}{(d-2)w_d} \int_{\mathbb{R}^d} \frac{v(y)f(y)}{|x-y|^{d-2}} \, \mathrm{d}y,
\end{equation}
where $G$ is the fundamental solution of $\frac{1}{2m}\Delta$, which is given by
\begin{equation*}
G(x)= -\frac{2m}{(d-2)w_d}\frac{1}{|x|^{d-2}}
\end{equation*}
and $w_d=\frac{2\pi^{\frac{d}{2}}}{\Gamma\left(\frac{d}{2}\right)}$. We write
\begin{equation}\label{eq: decomposition f=f1+f2}
-\frac{(d-2)w_d}{2m} f=f_1+f_2,
\end{equation}
where
\begin{align*}
f_1(x)= \int\limits_{\left\{|x-y|< 2\right\}} \frac{v(y)f(y)}{|x-y|^{d-2}}\, \mathrm{d}y \qquad \text{and} \qquad f_2(x)= \int\limits_{\left\{|x-y| \geq 2\right\}} \frac{v(y)f(y)}{|x-y|^{d-2}}\, \mathrm{d}y.
\end{align*}
Since the function $x \mapsto |x|^{-(d-2)} \chi_{\{|x| <2\}}(x)$ belongs to $L^1(\mathbb{R}^d)$ and by Lemma \ref{lem: vf in L^2} we have $v f \in L^2(\mathbb{R}^d)$, Young's inequality implies $f_1 \in L^2(\mathbb{R}^d)$. Let us show that $f_2\in L_{-s}^2(\mathbb{R}^d)$ for every $s>0$. We rewrite assumption \eqref{eq: assumptions on potential 1} as
\begin{equation*}
|v(x)| \leq C|x|^{-(2+\theta)},
\end{equation*} 
where $|x|\geq\gamma$ and $\theta>0$. Note that we can always assume $\theta<1$. Let $1\leq q<2$, then by Lemma \ref{lem: vf in L^2} we have $vf\in L_{\mathrm{loc}}^q(\mathbb{R}^d)$. Hence, by the use of Hölder's inequality with $p_1=\frac{2}{q}$ and $p_2= \frac{2}{2-q}$ we obtain
\begin{align}\label{eq: virtual level last expression}
\Vert fv \Vert_{L^q(\mathbb{R}^d)}^q &= \int_{\mathbb{R}^d}|f(y)v(y)|^q \, \mathrm{d}y = \int_{\mathbb{R}^d} \frac{|f(y)|^q}{(1+|y|)^{qs_0}}|v(y)|^q(1+|y|)^{qs_0}\, \mathrm{d}y\notag
\\ &\leq C\left(\int_{\mathbb{R}^d} \frac{|f(y)|^2}{(1+|y|)^{2s_0}}\, \mathrm{d}y \right)^{\frac{1}{p_1}} \left( \int_{\mathbb{R}^d} \, (1+|y|)^{-p_2q(2+\theta-s_0)} \mathrm{d}y\right)^{\frac{1}{p_2}}.
\end{align}
By \eqref{eq: f in L_s_0} we have $f\in L_{-s_0}^2(\mathbb{R}^d)$. Therefore, \eqref{eq: virtual level last expression} is finite for
\begin{equation}\label{eq: assumption on q}
p_2q(2+\theta-s_0)>d \qquad \Leftrightarrow \qquad q> \frac{d}{2+\frac{d}{2}-(s_0-\theta)},
\end{equation}
which implies that $v f \in L^{q}(\mathbb{R}^d)$ for
\begin{equation*}
\frac{1}{q}< \frac{2}{d}+ \frac{\frac{d}{2}-(s_0-\theta)}{d}.
\end{equation*}
Since $x\mapsto  |x|^{-(d-2)}\chi_{\{|x|\geq 2\}}(x)$ belongs to $L^p(\mathbb{R}^d)$, where $p>\frac{d}{(d-2)}$, i.e. $\frac{1}{p}<1-\frac{2}{d}$, we conclude by Young's inequality that $f_2 \in L^r(\mathbb{R}^d)$, where
\begin{equation*}
r>\frac{d}{\frac{d}{2}-(s_0-\theta)}.
\end{equation*}
By the use of H\"older's inequality it follows $f_2 \in L_{-s_1}^2(\mathbb{R}^d)$, where $s_0-\theta<s_1<1$.
Subsequently applying this type of argument for $d=4$ yields $f \in L_{-s}^2(\mathbb{R}^4)$ for all $s>0$.  Let us show that in the case of $d\geq 5$ we have $f\in L^2(\mathbb{R}^d)$. Note that $2^{-1}<2^{-1}+2d^{-1}<1$ holds if and only if $d\geq 5$. Since $vf\in L^q(\mathbb{R}^d)$ for any $1 \leq q \leq 2$, we can choose 
\begin{equation}
2^{-1}+2d^{-1}<q^{-1}<1 \qquad \qquad \text{and} \qquad \qquad 2^{-1}<p^{-1}<1-2d^{-1},
\end{equation}
such that $q^{-1}+p^{-1}=1+2^{-1}$. Applying Young's inequality yields $f \in L^2(\mathbb{R}^d)$.
\end{proof}
\begin{rem}
In dimension $d=3$ (see \cite{Ja2}) the corresponding resonance state belongs to the weighted Sobolev space
\begin{equation*}
L_{-s}^2\left(\mathbb{R}^3\right)=\left\{\varphi:\mathbb{R}^3\rightarrow \mathbb{R}: (1+|\cdot|)^{-s}\varphi \in L^2(\mathbb{R}^3)  \right\},
\end{equation*}
where $s>\frac{1}{2}$. Our proof can be adapted to the case $d=3$, where in view of \eqref{eq: assumption on q} one has $s_0-\theta>\frac{1}{2}$, which leads to $f \in L_{-s}^2(\mathbb{R}^3), \ s > \frac{1}{2}$.
\end{rem}
\begin{lem}\label{lem: <v,f>=0}
If $d=4$, then the solution $f$ of \eqref{eq: resonance or eigenfunction} satisfies
\begin{equation}\label{eq: resonance behaviour}
f(x) = - \frac{m}{2\pi^2}\frac{\langle v,f\rangle}{|x|^{2}} + \widetilde{f}(x)
\end{equation}
as $|x| \rightarrow \infty$, where $\widetilde{f} \in L^2(\mathbb{R}^4)$.
\end{lem}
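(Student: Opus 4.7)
My plan is to start from the integral representation established in equation \eqref{eq: integralequation for resonance}, which for $d=4$ (using $w_4=2\pi^2$) specializes to
\[
f(x)=-\frac{m}{2\pi^2}\int_{\mathbb{R}^4}\frac{v(y)f(y)}{|x-y|^2}\,\mathrm{d}y.
\]
A preparatory step is to verify that $\langle v,f\rangle=\int vf\,\mathrm{d}y$ is a well-defined finite number and, moreover, that $\int|y|\,|v(y)|\,|f(y)|\,\mathrm{d}y<\infty$. Both facts follow by Cauchy--Schwarz with a weight $(1+|y|)^{s}$ applied to $|v(y)|\leq C(1+|y|)^{-b}$ with $b>4$, $v\in L^4_{\mathrm{loc}}$, and the weighted integrability $f\in L^2_{-s}(\mathbb{R}^4)$ for every $s>0$ supplied by Lemma \ref{lem: behaviour of the resonance state} (taking $s$ slightly smaller than $b-3$ suffices for the first-moment bound). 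Setting $\widetilde f(x):=f(x)+\frac{m\langle v,f\rangle}{2\pi^2|x|^2}$ for large $|x|$ one rewrites
\[
\widetilde f(x)=-\frac{m}{2\pi^2}\int_{\mathbb{R}^4}\left[\frac{1}{|x-y|^2}-\frac{1}{|x|^2}\right]v(y)f(y)\,\mathrm{d}y,
\]
so the claim reduces to showing $\widetilde f\in L^2$ near infinity.

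The core of the argument is a standard splitting of the $y$-integral. For $|y|<|x|/2$ the elementary estimate $\bigl|\frac{1}{|x-y|^2}-\frac{1}{|x|^2}\bigr|\leq C|y|/|x|^3$ combined with the first-moment bound on $vf$ produces a contribution of order $|x|^{-3}$, which lies in $L^2$ near infinity in dimension four since $\int_{R}^{\infty}r^{-6}r^{3}\,\mathrm{d}r<\infty$. For $|y|\geq|x|/2$, the $1/|x|^2$ piece is controlled by $|x|^{-2}\int_{|y|\geq|x|/2}|vf|\,\mathrm{d}y\leq C|x|^{1-b}$ via Cauchy--Schwarz and the decay of $v$; this is square-integrable at infinity because $b>4$. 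For the $1/|x-y|^2$ piece in the far region, when additionally $|x-y|\geq|x|/4$ the same bound $C|x|^{1-b}$ applies. What remains is the near-diagonal region $\{|x-y|<|x|/4,\,|y|\geq|x|/2\}$, in which $|y|\geq 3|x|/4$ and hence $|v(y)|\leq C|x|^{-b}$ pointwise.

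The main obstacle is precisely this near-diagonal region, because a pointwise Cauchy--Schwarz is blocked by the divergence of $\int|z|^{-4}\,\mathrm{d}z$ at the origin in $\mathbb{R}^4$. My plan is to estimate the global $L^2$-norm in $x$ directly rather than look for a pointwise bound: factor $|x-y|^{-2}=|x-y|^{-1}\cdot|x-y|^{-1}$, apply Cauchy--Schwarz in $y$ to reduce the inner integral to $C|x|\cdot\bigl(\int_{|x-y|<|x|/4}|f(y)|^2|x-y|^{-2}\,\mathrm{d}y\bigr)^{1/2}$, square this, multiply by the prefactor $|x|^{-2b}$, and exchange the $x$- and $y$-integrals by Fubini. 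In the resulting inner $x$-integral the constraint $|x-y|<|x|/4$ forces $|x|$ comparable to $|y|$, so $|x|^{2-2b}\leq C|y|^{2-2b}$, and the remaining $\int|x-y|^{-2}\,\mathrm{d}x$ over that region contributes a factor $O(|y|^{2})$. The total collapses to a constant multiple of $\int|f(y)|^2|y|^{4-2b}\,\mathrm{d}y$, which is finite because Lemma \ref{lem: behaviour of the resonance state} permits the choice $s=b-2>2$ in the weight defining $L^2_{-s}$, thanks to $b>4$. Assembling the four pieces yields $\widetilde f\in L^2(\mathbb{R}^4)$ at infinity, which is the claim.
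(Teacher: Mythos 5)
Your argument is correct, and it reaches the conclusion by a decomposition that differs in its details from the paper's, although both proofs start from the same convolution identity \eqref{eq: integralequation for resonance} and both ultimately rest on Lemma \ref{lem: behaviour of the resonance state} together with the decay assumption $b>4$. The paper first removes the diagonal singularity once and for all by splitting off the region $|x-y|<2$, whose contribution is in $L^2$ by Young's inequality since $vf\in L^2$; the remaining far-field integral is then cut at $|y|=|x|^{1/2}$, with the region $|y|>|x|^{1/2}$ controlled by the decay of $v$ and the region $|y|\le|x|^{1/2}$ producing the leading term $\langle v,f\rangle|x|^{-2}$ up to a relative error $O(|x|^{-1/2})$; no moment of $vf$ beyond $vf\in L^1$ is used there. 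You instead subtract the leading term globally and estimate the difference kernel $|x-y|^{-2}-|x|^{-2}$. Your cut at $|y|=|x|/2$ exploits first-order cancellation and therefore needs the first moment $\int|y|\,|v(y)|\,|f(y)|\,\mathrm{d}y<\infty$, which you justify correctly (any $s<b-3$ works), and it leaves a near-diagonal region of radius $|x|/4$ where, as you rightly note, a pointwise bound is impossible in $\mathbb{R}^4$; your Cauchy--Schwarz and Fubini argument there, reducing matters to $\int|f(y)|^2|y|^{4-2b}\,\mathrm{d}y<\infty$ via the choice $s=b-2$, is sound (and in that region $|y|$ is bounded away from zero, so the weight is indeed comparable to $(1+|y|)^{-2(b-2)}$). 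As for what each approach buys: the paper's fixed cut $|x-y|<2$ is the cheaper device for disposing of the diagonal, while your route gives a slightly sharper pointwise rate $O(|x|^{-3})$ on the bulk region and delivers $\widetilde f$ directly as an $L^2$ remainder rather than through the two-sided sandwich for $|h_2|$. A minor cosmetic remark: your value $w_4=2\pi^2$ is the one actually consistent with the constant $-\tfrac{m}{2\pi^2}$ in the statement.
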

\begin{proof}
For $d=4$ we have $w_4=4\pi^2$ and in view of \eqref{eq: decomposition f=f1+f2}
\begin{equation*}
-\frac{2\pi^2}{m} f=f_1+f_2,
\end{equation*}
where $f_1 \in L^2(\mathbb{R}^4)$ and $f_2=g_2+h_2$, such that
\begin{equation*}
g_2(x)= \int\limits_{\left\{|x-y|\geq 2 \wedge |y|>|x|^\frac{1}{2}\right\}} \frac{v(y)f(y)}{|x-y|^2} \, \mathrm{d}y, \ \ \qquad h_2(x)=\int\limits_{\left\{|x-y|\geq 2 \wedge |y|\leq|x|^\frac{1}{2}\right\}}\frac{v(y)f(y)}{|x-y|^2} \, \mathrm{d}y.
\end{equation*}
Since we consider $|x|\rightarrow \infty$, we can set $|x| > \max\{4,\gamma^2\}$. By the use of
\begin{equation*}
|v(y)| \leq C(1+|y|)^{-2-\theta}
\end{equation*}
for $|y|>|x|^\frac{1}{2}\geq \gamma$ and $\theta>0$ we obtain
\begin{equation*}
|g_2(x)| \leq C\left(1+|x|^\frac{1}{2}\right)^{-\frac{\theta}{2}}\int\limits_{\left\{|x-y|\geq 2\right\}} \frac{|f(y)|}{(1+|y|)^{2+\frac{\theta}{2}}|x-y|^2}\, \mathrm{d}y.
\end{equation*}
Now since $f\in L_{-s}^2(\mathbb{R}^4)$ holds for all $s>0$, we have $(1+|\cdot|)^{-(2+\frac{\theta}{2})}f \in L^1(\mathbb{R}^4)$ and therefore $g_2 \in L^2(\mathbb{R}^4)$. For $|y|\leq|x|^{\frac{1}{2}}$ we have
\begin{equation*}
|x-y|^{-2}=|x|^{-2}\left|\frac{x}{|x|}-\frac{y}{|x|} \right|^{-2} \geq |x|^{-2}\left(1+|x|^{-\frac{1}{2}}\right)^{-2} = |x|^{-2}\frac{\left( 1-|x|^{-\frac{1}{2}}\right)^2}{\left( 1-|x|^{-1}\right)^2} \geq |x|^{-2}\left( 1-|x|^{-\frac{1}{2}}\right)^2.
\end{equation*}
On the other hand we can estimate 
\begin{equation*}
\ \ |x-y|^{-2}\leq \left(|x|-|y| \right)^{-2}=|x|^{-2} \left(\sum_{k=0}^\infty \left(\frac{|y|}{|x|}\right)^k\right)^2  \leq |x|^{-2}\left(1+2|x|^{-\frac{1}{2}} \right)^2.
\end{equation*}
This implies
\begin{equation*}
\frac{\langle v,f \rangle}{|x|^2}\left(1-|x|^{-\frac{1}{2}} \right)^2 \leq |h_2(x)|\leq \frac{\langle v,f \rangle}{|x|^2}\left(1+2|x|^{-\frac{1}{2}} \right)^2,
\end{equation*}
which completes the proof.
\end{proof}
\section{Resonance Interaction in dimension four}
In this section we consider $d=4$ and we further assume that the pair potentials $v_\alpha$ satisfy
\begin{equation}\label{eq: assumptions on potentials in dimension 4}
v_\alpha(x)\leq 0 \qquad \qquad \text{and} \qquad \qquad |v_\alpha(x)|\leq C(1+|x|)^{-b}, \ b>4.
\end{equation}
We will adapt the technique of \cite{Sobolev} to simplify the representation of $H$ and carry out the computations in the momentum space. By using this method we can also highlight the main reason why the Efimov effect is absent in this case. 

Following \cite{Fad} we denote by $k_i$ the conjugate variable of $x_i$ and introduce the set of variables $(k_\alpha,p_\alpha)$, conjugate with respect to the Jacobi-coordinates ($x_\alpha,y_\alpha)$, see \eqref{Jacobi in momentum 1}-\eqref{Jacobi in momentum 3} in the Appendix. The shift from $x_\alpha$ to $k_\alpha$ is carried out by the partial Fourier transform
\begin{equation}\label{eq:fouriertransform}
\left( \Phi_\alpha f\right)(k_\alpha, \cdot)= (2\pi)^{-2}\int_{\mathbb{R}^4} \mathrm{e}^{-\mathrm{i} \langle k_\alpha,x_\alpha \rangle} f(x_\alpha, \cdot) \, \mathrm{d}x_\alpha.
\end{equation}
Every pair of variables $k_\alpha,p_\alpha$ can be expressed by means of every other pair (see \eqref{Jacobi with p and k 1}-\eqref{Jacobi with p and k 3}). In this setting the three-particle Schr\"odinger operator (by abuse of notation) has the form
\begin{equation}\label{eq: hamiltonina in dimension four}
H=H_0+\sum_\alpha V_\alpha,
\end{equation}
where the kinetic energy is given by the multiplication of the function
\begin{equation*}
H_0f(k,p)= H^0(k,p)\cdot f(k,p),
\end{equation*}
where
\begin{equation}\label{eq: H^0 in all coordinates}
H^0(k,p)=\frac{k_\alpha^2}{2m_\alpha}+\frac{p_\alpha^2}{2n_\alpha} = \frac{k_\beta^2}{2m_\beta}+\frac{p_\beta^2}{2n_\beta} = \frac{k_\gamma^2}{2m_\gamma}+\frac{p_\gamma^2}{2n_\gamma}
\end{equation}
and the interactions are given by
\begin{equation*}
V_\alpha= \Phi_\alpha v_\alpha \Phi_\alpha^\ast.
\end{equation*}
Often it is useful to work with coordinates $(p_\alpha,p_\beta)$ instead of $(k_\alpha,p_\alpha)$. The relations are given by
\begin{equation}\label{eq: k_ab in p_a and p_b}
k_{\alpha}= d_{\alpha\beta}p_{\alpha}+e_{\alpha\beta}p_{\beta},
\end{equation}
where the constants $d_{\alpha\beta}$ and $d_{\alpha\beta}$ depend only on the masses $m_1,m_2$ and $m_3$ (see \eqref{Jacobi with p and k 1}-\eqref{Jacobi with p and k 3}). We denote by $H_{\alpha \beta}^0$ the function $H^0$ expressed in terms of $p_\alpha,p_\beta$, which then takes the form
\begin{equation*}
H_{\alpha \beta}^0(p_\alpha,p_\beta)=\frac{p_\alpha^2}{2m_\beta}+ \frac{\langle p_\alpha,p_\beta\rangle}{l_{\gamma}} + \frac{p_\beta^2}{2m_\alpha},
\end{equation*}
where $l_{\gamma} \in \{m_1,m_2,m_3\}$ (see \eqref{AP last}). By virtue of \eqref{eq: reduced masses1}-\eqref{eq: reduced masses2} it follows that
\begin{equation}\label{eq: H_0 etimate1}
H_{\alpha \beta}^0(p_\alpha,p_\beta) \geq \frac{p_\alpha^2}{2l_\alpha} + \frac{p_\beta^2}{2l_\beta}.
\end{equation}
Hence, by the use of the elementary Young's inequality one obtains
\begin{equation}\label{eq: H_0 etimate2}
H_{\alpha \beta}^0(p_\alpha,p_\beta) \geq c|p_{\alpha}|^{2\kappa}|p_{\beta}|^{2\kappa'}, \qquad \kappa+\kappa'=1.
\end{equation}
Following \cite{Jafaev} and \cite{Sobolev} we will use a symmetrized form of Faddeev equations to study the discrete spectrum of $H$. See in the Appendix for a detailed derivation of the Faddeev equations.
\begin{df}\label{def: definition of A(z)}
Let $z<0$ and
\begin{equation*}
A(z)=W^{\frac{1}{2}}(z)K(z)W^{\frac{1}{2}}(z),
\end{equation*}
where 
\begin{equation*}
W(z)=\begin{pmatrix}
W_{12}(z) & 0 & 0
\\ 0 & W_{23}(z) & 0
\\ 0 & 0 & W_{31}(z)
\end{pmatrix},
\end{equation*}
such that
\begin{equation}\label{eq: definition of w_alpha in three body case}
W_\alpha(z)= I + |V_\alpha|^\frac{1}{2}R_\alpha(z) |V_\alpha|^\frac{1}{2}, \qquad \qquad R_\alpha(z)=(H_0+V_\alpha-z)^{-1}
\end{equation}
and 
\begin{equation}\label{def: definition K(z)}
K(z)= \begin{pmatrix}
0 & K_{12|23}(z) & K_{12|31}(z)
\\ K_{23|12}(z) & 0 & K_{23|31}(z)
\\ K_{31|12}(z) & K_{32|23}(z) & 0
\end{pmatrix},
\end{equation}
such that
\begin{equation*}
K_{\alpha \beta}(z) = |V_\alpha|^{\frac{1}{2}}R_0(z) |V_\beta|^{\frac{1}{2}}, \qquad R_0(z)=(H_0-z)^{-1}.
\end{equation*}
\end{df}
The main property of $A(z)$ is the Birman-Schwinger-type characteristic. The proof of the following statement is given in \cite{Sobolev}.
\begin{thm}\cite[Theorem 4.1]{Sobolev}\label{thm counting functions of A and N}
Let $N(z)$ be the number of eigenvalues of the operator $H$ below $z<0$ and let $n(1,A(z))$ be the number of eigenvalues of the operator $A(z)$ greater than one. Then
\begin{equation*}
N(z)=n(1,A(z)).
\end{equation*}
\end{thm}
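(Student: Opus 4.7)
The plan is to prove this as a Birman--Schwinger-type identity adapted to the Faddeev decomposition. The proof has two stages: first establish a pointwise correspondence (that $z_0<0$ is an eigenvalue of $H$ if and only if $1$ is an eigenvalue of $A(z_0)$, with matching multiplicities), and then upgrade this to the counting identity via a monotonicity argument in $z$.

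For the pointwise correspondence, I would start from $H\psi=z_0\psi$ and introduce the Faddeev components $\psi_\alpha := R_0(z_0)|V_\alpha|\psi$ (using $v_\alpha\leq 0$), which satisfy $\sum_\alpha \psi_\alpha=\psi$ and hence $\psi_\alpha = R_0(z_0)|V_\alpha|\sum_\beta\psi_\beta$. Setting $\phi_\alpha:=|V_\alpha|^{1/2}\psi$ yields the linear system $\phi_\alpha = |V_\alpha|^{1/2}R_0(z_0)\sum_\beta |V_\beta|^{1/2}\phi_\beta$. Moving the diagonal ($\beta=\alpha$) term to the left and invoking the second resolvent identity in the symmetrized form
\[
I-|V_\alpha|^{1/2}R_0(z_0)|V_\alpha|^{1/2} = W_\alpha^{-1}(z_0),
\]
the system becomes $\Phi = W(z_0)K(z_0)\Phi$ with $\Phi=(\phi_{12},\phi_{23},\phi_{31})^T$. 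The similarity transformation $\tilde\Phi := W^{-1/2}(z_0)\Phi$ produces the self-adjoint eigenvalue equation $A(z_0)\tilde\Phi=\tilde\Phi$. Both directions are reversible: given $\tilde\Phi$ one reconstructs $\psi := R_0(z_0)\sum_\alpha |V_\alpha|^{1/2}\phi_\alpha$ and verifies $H\psi=z_0\psi$, so the geometric multiplicities agree.

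To obtain the counting identity, I would exploit that $z\mapsto A(z)$ is continuous and compact-operator valued, and that its eigenvalues cross the value $1$ only at the eigenvalues of $H$ (by the pointwise correspondence). The key monotonicity input is that both $z\mapsto R_0(z)$ and $z\mapsto R_\alpha(z)$ are operator-monotone increasing on $(-\infty,0)$; here the virtual-level assumption $h_\alpha\geq 0$ is essential since it guarantees $R_\alpha(z)\geq 0$ for $z<0$ and hence $W_\alpha(z)\geq I$. A perturbative argument at a simple crossing $z_0=E_j$, reading off the sign of $\partial_z A(z_0)$ at the eigenvector $\tilde\Phi$ via the Faddeev identification, shows each eigenvalue of $A(z)$ is monotone increasing in $z$. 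Combined with $A(z)\to 0$ as $z\to -\infty$, this forces each eigenvalue $E_j<z$ of $H$ to contribute exactly one eigenvalue of $A(z)$ strictly above $1$, yielding $N(z)=n(1,A(z))$.

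The main obstacle is the joint treatment of $W(z)$ and $K(z)$. The matrix $K(z)$ is not sign-definite (it has zero diagonal), and $W(z)$ and $K(z)$ do not commute, so operator monotonicity of the entries $R_0(z)$ and $R_\alpha(z)$ does not trivially pass to the product $W^{1/2}KW^{1/2}$; the monotonicity of the individual eigenvalues of $A(z)$ must be argued at the level of the implicit function theorem at crossings, rather than by a global positivity comparison. Compactness of each off-diagonal kernel $K_{\alpha\beta}(z) = |V_\alpha|^{1/2}R_0(z)|V_\beta|^{1/2}$ is also needed to make $n(1,A(z))$ well defined via the min-max principle, and this is precisely where the stronger decay assumption $b>4$ in \eqref{eq: assumptions on potentials in dimension 4} enters the argument.
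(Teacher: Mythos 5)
Your first stage is correct and coincides with what the paper actually records: the theorem itself is not proved in the paper but quoted from Sobolev, and Appendix~B contains exactly the algebra you reproduce. With $V_\alpha\leq 0$, the components $\phi_\alpha=|V_\alpha|^{1/2}\psi$ satisfy $\Phi=(D(z_0)+K(z_0))\Phi$, where $D(z)=\diag\bigl(|V_\alpha|^{1/2}R_0(z)|V_\alpha|^{1/2}\bigr)$, and since $I-D(z)=W^{-1}(z)$ by the resolvent identity \eqref{eq: expression for w(z)}, conjugation by $W^{1/2}$ gives $A(z_0)\tilde\Phi=\tilde\Phi$; multiplicities match because $\psi\mapsto(|V_\alpha|^{1/2}\psi)_\alpha$ is injective on eigenspaces of $H$ below $0$ and the reconstruction $\psi=R_0\sum_\alpha|V_\alpha|^{1/2}\phi_\alpha$ cannot vanish. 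So far, so good.

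The gap is in your second stage. You correctly flag the obstacle -- $z\mapsto A(z)$ is not an operator-monotone family because $K(z)$ has zero diagonal and does not commute with $W(z)$ -- but the proposed repair is only asserted. The claim that ``each eigenvalue of $A(z)$ is monotone increasing in $z$'' does not follow from a sign computation at a single crossing, and the sign computation itself is nontrivial: $\partial_zA$ contains $\partial_z W^{1/2}(z)$, which for a non-commuting family is given by an integral formula, and the cancellation that would yield $\langle\partial_zA(z_0)\tilde\Phi,\tilde\Phi\rangle=\|\psi\|^2>0$ (it does hold formally if everything commutes) is exactly the content of the theorem in infinitesimal form; you would additionally have to rule out tangential and degenerate crossings. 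The standard argument -- the one behind \cite[Theorem 4.1]{Sobolev}, and one congruence away from your setup -- avoids tracking eigenvalues of $A(z)$ altogether. Write $\sum_\alpha V_\alpha=-\mathcal{V}^*\mathcal{V}$ with $\mathcal{V}u=(|V_\alpha|^{1/2}u)_\alpha$; the ordinary Birman--Schwinger principle applied to the genuinely monotone family $\mathcal{V}R_0(z)\mathcal{V}^*=D(z)+K(z)$, whose $z$-derivative is $\mathcal{V}R_0(z)^2\mathcal{V}^*\geq 0$, gives $N(z)=n(1,D(z)+K(z))$ (here $\|D(z)\|<1$ and $K(z)$ is compact, so this quantity is finite). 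Then the factorization $I-D-K=(I-D)^{1/2}\bigl(I-W^{1/2}KW^{1/2}\bigr)(I-D)^{1/2}$ together with Sylvester's law of inertia -- the number of negative spectral dimensions is invariant under congruence by the bounded invertible operator $(I-D)^{1/2}$ -- gives $n(1,D+K)=n(1,A(z))$. Replacing your crossing analysis by this inertia step closes the argument; all the monotonicity you need then lives on $\mathcal{V}R_0(z)\mathcal{V}^*$, where it is immediate.
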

\begin{rem} In view of Theorem \ref{thm counting functions of A and N}, to prove Theorem \ref{main result} in the case $d=4$, it is sufficient to prove the compactness of $A(z)$ for $z \rightarrow 0$. In the presence of resonances in the two-body subsystems the corresponding operator $A_{\mathbb{R}^3}(z)$ in dimension three is not compact up to $z=0$, due to a singularity of $W_{\mathbb{R}^3}(z)$. To this end, we need to study the the operator $W_\alpha(z)$ in the two-body subsystems.
\end{rem}
\subsection*{Two-body subsystems in dimension four}
\begin{df}\label{df: w(z)}
Let $r_\alpha(z), z<0,$ be the resolvent of $h_\alpha=-\frac{1}{2m_\alpha}\Delta+v_\alpha$ and let
\begin{equation}\label{eq: definition of w}
w_\alpha(z)=I+|v_\alpha|^{\frac{1}{2}}r_\alpha(z)|v_\alpha|^{\frac{1}{2}}.
\end{equation}
\end{df}
Note that $w_\alpha(z)$ is uniformly bounded in $L^2(\mathbb{R}^d)$ for every $z\leq z_0<0$, where $|z_0|$ can be chosen arbitrarily small. Using the resolvent identity
\begin{equation*} 
r_{\alpha}(z)=r_0(z)-r_0(z)v_{\alpha} r(z)=r_0(z)-r_{\alpha}(z)v_{\alpha}r_0(z),
\end{equation*}
we have
\begin{align*}
I&=\left(I-|v_\alpha|^\frac{1}{2}r_0(z)|v_\alpha|^\frac{1}{2} \right)\left( I+|v_\alpha|^\frac{1}{2}r_\alpha(z)|v_\alpha|^\frac{1}{2}\right)
\\&=\left( I+|v_\alpha|^\frac{1}{2}r_\alpha(z)|v_\alpha|^\frac{1}{2}\right)\left(I-|v_\alpha|^\frac{1}{2}r_0(z)|v_\alpha|^\frac{1}{2} \right),
\end{align*}
which implies
\begin{equation}\label{eq: expression for w(z)}
w_\alpha(z)=I+|v_\alpha|^{\frac{1}{2}}r_\alpha(z)|v_\alpha|^{\frac{1}{2}}=\left( I-|v_\alpha|^\frac{1}{2}r_0(z)|v_\alpha|^\frac{1}{2}\right)^{-1}.
\end{equation}
Note that in accordance with Definition \ref{def: definition of A(z)} we have
\begin{equation}\label{eq: relation of W and w}
W_\alpha(z)=I+|V_\alpha|^\frac{1}{2}R_\alpha(z)|V_\alpha|^\frac{1}{2}=\Phi_\alpha w_\alpha \left( z- \frac{p_\alpha^2}{2n_\alpha} \right)\Phi_\alpha^\ast,
\end{equation}
where $\Phi_\alpha$ is the partial Fourier transform defined in \eqref{eq:fouriertransform}. The existence of a resonance of the two-body Hamiltonian $h_\alpha$ affects the behaviour of $w_\alpha(z)$ for $z\rightarrow 0$ (see \cite{Jensen4}). It produces a singularity of $w_\alpha(z)$ at $z=0$, which leads in dimension $d=3$ to the Efimov effect (see \cite{Sobolev}). We will see that in dimension four the singularity is not strong enough to break the compactness of $A(z)$ for $z\rightarrow 0$.
\begin{lem}\label{lem: resonance equiv to eigenfunction}
Let $G_\alpha$ be the integral operator with the kernel
\begin{equation}\label{eq kernel G_0}
G_\alpha(x,y)=\frac{m_\alpha}{2\pi^2} \frac{|v_\alpha(x)|^{\frac{1}{2}}|v_\alpha(y)|^\frac{1}{2}}{|x-y|^2},
\end{equation}
acting in $L^2(\mathbb{R}^4)$. If $\lambda=0$ is a resonance of $h_\alpha$, then $\mu=1$ is a simple eigenvalue of $G_\alpha$.
\end{lem}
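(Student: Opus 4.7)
The plan is to set up a correspondence between $L^2$ eigenfunctions of $G_\alpha$ at eigenvalue $1$ and distributional solutions in $\dot{H}^1(\mathbb{R}^4)$ of $h_\alpha g=0$, and then to invoke the non-degeneracy of the resonance state (Corollary~\ref{Cor: proof}) to conclude that this eigenspace is one-dimensional. First I produce the eigenfunction: let $f\in\dot{H}^1(\mathbb{R}^4)$ be the resonance state of Theorem~\ref{lem: virtual level implies rsonance} and set $\psi:=|v_\alpha|^{1/2}f$. The estimate at the end of the proof of that theorem shows $\int|v_\alpha||f|^2\,\mathrm{d}x=1$, whence $\psi\in L^2(\mathbb{R}^4)\setminus\{0\}$. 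Combining the integral identity~\eqref{eq: integralequation for resonance} for $d=4$ with $v_\alpha\leq 0$ gives
\begin{equation*}
f(x)=\frac{m_\alpha}{2\pi^{2}}\int_{\mathbb{R}^4}\frac{|v_\alpha(y)|f(y)}{|x-y|^{2}}\,\mathrm{d}y,
\end{equation*}
and multiplying by $|v_\alpha(x)|^{1/2}$ yields $G_\alpha\psi=\psi$, so $1$ is an eigenvalue of $G_\alpha$.

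For the converse, let $\phi\in L^2(\mathbb{R}^4)$ satisfy $G_\alpha\phi=\phi$ and define
\begin{equation*}
g(x):=\frac{m_\alpha}{2\pi^{2}}\int_{\mathbb{R}^4}\frac{|v_\alpha(y)|^{1/2}\phi(y)}{|x-y|^{2}}\,\mathrm{d}y.
\end{equation*}
Then $|v_\alpha|^{1/2}g=G_\alpha\phi=\phi$, and because $-\frac{m_\alpha}{2\pi^{2}|x|^{2}}$ is the fundamental solution of $\frac{1}{2m_\alpha}\Delta$ in $\mathbb{R}^4$, one has $-\frac{1}{2m_\alpha}\Delta g=|v_\alpha|^{1/2}\phi=|v_\alpha|\,g=-v_\alpha g$ in $\mathcal{D}'(\mathbb{R}^4)$, so $h_\alpha g=0$ distributionally. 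Under assumptions~\eqref{eq: assumptions on potentials in dimension 4} one checks directly that $|v_\alpha|^{1/2}\in L^4(\mathbb{R}^4)$, hence by H\"older $|v_\alpha|^{1/2}\phi\in L^{4/3}(\mathbb{R}^4)$; the Hardy--Littlewood--Sobolev inequality (equivalently, the embedding $L^{4/3}(\mathbb{R}^4)\hookrightarrow\dot{H}^{-1}(\mathbb{R}^4)$ dual to $\dot{H}^1\hookrightarrow L^4$) then forces $g\in\dot{H}^1(\mathbb{R}^4)$.

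With $g\in\dot{H}^1$ and $h_\alpha g=0$ in hand, I decompose $g=cf+g^{\perp}$ with $\langle\nabla g^{\perp},\nabla f\rangle=0$; since $h_\alpha f=0$ as well, one has $h_\alpha g^{\perp}=0$, and pairing against $g^\perp$ gives $\langle h_\alpha g^{\perp},g^{\perp}\rangle=0$, so Corollary~\ref{Cor: proof} forces $g^{\perp}=0$. Hence $g=cf$ and $\phi=|v_\alpha|^{1/2}g=c\psi$, proving that the eigenspace of $G_\alpha$ associated with eigenvalue $1$ is one-dimensional. The main technical hurdle is the $\dot{H}^1$-regularity of $g$ from only $\phi\in L^2$: the integrability $|v_\alpha|^{1/2}\in L^4$ and the critical Sobolev embedding $\dot{H}^1(\mathbb{R}^4)\hookrightarrow L^4(\mathbb{R}^4)$ line up exactly in $d=4$ through the Riesz kernel $|x-y|^{-2}$, which is precisely why the construction closes in this dimension.
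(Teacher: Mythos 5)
Your proposal is correct, and for the half of the statement that actually requires work it is more complete than the paper's own argument. The forward direction (existence of the eigenvalue) is identical in both: set $\varphi=|v_\alpha|^{1/2}f$ for the resonance state $f$, note $\varphi\in L^2$ by Lemma~\ref{lem: behaviour of the resonance state}, and feed the integral identity \eqref{eq: integralequation for resonance} together with $v_\alpha\le 0$ into the kernel of $G_\alpha$ to get $G_\alpha\varphi=\varphi$. Where you diverge is on simplicity: the paper disposes of it in one sentence by asserting that the resonance is non-degenerate (citing Lemma~\ref{lem: <v,f>=0}), and in particular never verifies that \emph{every} $L^2$-eigenfunction of $G_\alpha$ at $\mu=1$ arises from a resonance state. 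You close exactly that gap by inverting the correspondence: given $G_\alpha\phi=\phi$ you reconstruct $g$ as a Riesz potential of $|v_\alpha|^{1/2}\phi$, check $|v_\alpha|^{1/2}g=\phi$ and $h_\alpha g=0$ distributionally, and place $g$ in $\dot H^1(\mathbb{R}^4)$ via $|v_\alpha|^{1/2}\in L^4$, H\"older and Hardy--Littlewood--Sobolev — an argument that genuinely exploits $d=4$ and the decay $b>4$ from \eqref{eq: assumptions on potentials in dimension 4}. The reduction of one-dimensionality to the coercivity estimate of Corollary~\ref{Cor: proof} is then clean, with two small caveats: pairing the distributional identity $h_\alpha g^{\perp}=0$ against $g^{\perp}\in\dot H^1$ needs the same $C_0^\infty$-density step already used in the proof of Theorem~\ref{lem: virtual level implies rsonance}, and Corollary~\ref{Cor: proof} itself rests on the (unproved) non-degeneracy remark preceding it, so you are not obtaining an independent proof of non-degeneracy but rather deriving simplicity of the $G_\alpha$-eigenvalue from it — which is a legitimate and, in fact, tidier logical route than the paper's.
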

\begin{proof}
By Lemma \ref{lem: <v,f>=0} the resonance is non-degenerate. Let $f$ be a resonance state of $h_\alpha$ and let $\varphi=|v_\alpha|^{\frac{1}{2}}f$. Then by Lemma \ref{lem: behaviour of the resonance state} we have $\varphi \in L^2(\mathbb{R}^4)$ and
\begin{align*}
\left(G_\alpha\varphi\right)(x)&= \frac{m_\alpha}{2\pi^2} \int_{\mathbb{R}^4} \frac{|v_\alpha(x)|^{\frac{1}{2}}|v_\alpha(y)|^{\frac{1}{2}}}{|x-y|^2}\varphi(y)\,\mathrm{d}y= |v_\alpha(x)|^{\frac{1}{2}}\left(-\frac{m_\alpha}{2\pi^2} \int_{\mathbb{R}^4} \frac{v_\alpha(y)f(y)}{|x-y|^2}\,\mathrm{d}y\right)
\\ &= |v_\alpha(x)|^{\frac{1}{2}}f(x)= \varphi(x).
\end{align*}
\end{proof}
\begin{lem}\label{lem: free resolvent}
Let $G_\alpha$ be the operator defined by the kernel \eqref{eq kernel G_0}. For $z<0$, $|z|$ sufficiently small, there exist compact operators $G_1$, $G_2$ and a constant $\delta>0$, such that
\begin{equation*}
|v_\alpha|^\frac{1}{2}r_0(z)|v_\alpha|^\frac{1}{2}=G_\alpha+zG_1+z\ln|z|G_2+|z|^{1+\delta}G_\alpha^{(\delta)}(z),
\end{equation*}
where $G_\alpha^{(\delta)}(z)$ is an operator, such that $\Vert G_\alpha^{(\delta)}(z) \Vert_{HS} \leq C_\delta |z|^\delta $, where $\Vert \cdot \Vert_{HS}$ denotes the Hilbert-Schmidt norm.
\end{lem}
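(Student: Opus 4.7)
My starting point is the explicit kernel of the free resolvent in $\mathbb{R}^4$. For $z<0$ and $\kappa=\sqrt{-2m_\alpha z}$, the operator $r_0(z)=(-\tfrac{1}{2m_\alpha}\Delta-z)^{-1}=2m_\alpha(-\Delta+\kappa^2)^{-1}$ has the integral kernel
\begin{equation*}
r_0(z)(x,y)=\frac{m_\alpha}{2\pi^2}\cdot\frac{\kappa K_1(\kappa|x-y|)}{|x-y|},
\end{equation*}
where $K_1$ is the modified Bessel function of the second kind. The small-argument expansion
\begin{equation*}
K_1(t)=\frac{1}{t}+\frac{t}{2}\ln(t/2)+c_0\,t+O(t^3|\ln t|)\qquad(t\to 0^+)
\end{equation*}
converts, after expanding $\ln(\kappa r/2)=\tfrac12\ln|z|+\ln|x-y|+\mathrm{const}$ and using $\kappa^2=-2m_\alpha z$, into
\begin{equation*}
r_0(z)(x,y)=\frac{m_\alpha}{2\pi^2|x-y|^2}+z\ln|z|\,\gamma_2+z\,\bigl(\gamma_1+\gamma_1'\ln|x-y|\bigr)+\rho(z;x,y),
\end{equation*}
for suitable explicit constants $\gamma_1,\gamma_1',\gamma_2$ and a remainder $\rho$.

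Multiplying by $|v_\alpha|^{1/2}$ from both sides I read off $G_\alpha$ (which equals the kernel of Lemma \ref{lem: resonance equiv to eigenfunction}), together with
\begin{equation*}
G_2(x,y)=\gamma_2\,|v_\alpha(x)|^{1/2}|v_\alpha(y)|^{1/2},\qquad G_1(x,y)=\bigl(\gamma_1+\gamma_1'\ln|x-y|\bigr)|v_\alpha(x)|^{1/2}|v_\alpha(y)|^{1/2}.
\end{equation*}
The operator $G_2$ is rank one and hence compact (here $|v_\alpha|^{1/2}\in L^2(\mathbb{R}^4)$, since $b>4$). For $G_1$ I check Hilbert--Schmidt-ness directly: the integral $\int\int|v_\alpha(x)||v_\alpha(y)|\bigl(1+|\ln|x-y||^2\bigr)\,\mathrm{d}x\,\mathrm{d}y$ is finite because $\log$ is locally square integrable in $\mathbb{R}^4$ and the polynomial decay $b>4$ controls the behaviour at infinity.

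The remaining task is the Hilbert--Schmidt bound on $G_\alpha^{(\delta)}(z)$. Writing $E(t):=K_1(t)-\tfrac{1}{t}-\tfrac{t}{2}\ln(t/2)-c_0 t$, the remainder kernel is $\tfrac{m_\alpha}{2\pi^2}|v_\alpha(x)|^{1/2}|v_\alpha(y)|^{1/2}\,\kappa E(\kappa|x-y|)/|x-y|$. I split the integration at $|x-y|=\kappa^{-1}=(-2m_\alpha z)^{-1/2}$. In the near regime $\kappa|x-y|\le 1$ the Taylor remainder gives $|\kappa E(\kappa r)/r|\le C\kappa^4 r^2|\ln(\kappa r)|$, so after substituting $u=\kappa(x-y)$ the restriction $|x-y|\le\kappa^{-1}$ produces a factor $|x-y|^4\le \kappa^{-4}$ which, combined with $\kappa^8\sim|z|^4$, yields an integrand of size $|z|^2|\ln(\sqrt{|z|}|x-y|)|^2\,|v_\alpha(x)||v_\alpha(y)|$, integrable (with a mild $(\ln|z|)^2$ factor) thanks to $v_\alpha\in L^1(\mathbb{R}^4)$. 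In the far regime $\kappa|x-y|>1$ the actual kernel $\kappa K_1(\kappa r)/r$ is exponentially small, so the remainder is essentially the \emph{subtracted} polynomial-in-$z$ part; the factor $|x-y|>\kappa^{-1}$ together with the decay $|v_\alpha(x)|\le C(1+|x|)^{-b}$, $b>4$, makes the remaining integrals converge and absorbs the slow $\ln|x-y|$ growth. Summing both contributions produces $\|\rho\|_{HS}^2=O(|z|^{2+2\delta})$ for any $\delta\in(0,1)$, which is precisely the required estimate.

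The main obstacle is the HS bound on the remainder: one has to match two very different regimes of $K_1$ (Taylor expansion for small argument, exponential decay for large argument) at the crossover scale $\kappa^{-1}$, while simultaneously using the decay $b>4$ of $v_\alpha$ to absorb the polynomial growth produced by the subtracted terms in the far regime. Once this splitting is carried out carefully and the powers of $\kappa$ are tracked, the bound follows.
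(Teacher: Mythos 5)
Your setup coincides with the paper's: the explicit Bessel/Hankel kernel of $r_0(z)$ in $\mathbb{R}^4$, the small-argument expansion from which $G_\alpha$, $G_1$, $G_2$ are read off (your compactness checks for these are fine), and a splitting of the remainder at the crossover scale $|x-y|\sim\kappa^{-1}$. The gap is in the Hilbert--Schmidt bound on the remainder. In the near regime you estimate $|\kappa E(\kappa r)/r|\le C\kappa^4 r^2|\ln(\kappa r)|$ and then discard the $r$-dependence entirely via $r^4\le\kappa^{-4}$ after squaring, which leaves $\kappa^8\kappa^{-4}=\kappa^4\sim|z|^2$ for the squared kernel and hence only $\Vert\rho_{\mathrm{near}}\Vert_{HS}=O(|z|\,|\ln|z||)$. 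That is not $O(|z|^{1+\delta})$ for any $\delta>0$, so the claimed conclusion $\Vert\rho\Vert_{HS}^2=O(|z|^{2+2\delta})$ does not follow from your computation. The same loss occurs in the far regime: each subtracted term is only $O\bigl(|z|(|\ln|z||+|\ln|x-y||)\bigr)$ pointwise, and you invoke $|x-y|>\kappa^{-1}$ only to make the integrals converge, not to extract an additional power of $|z|$.

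The missing idea is an interpolation that trades a small power of $|x-y|$ against the decay of $v_\alpha$. On $\kappa r\le 1$ write $(\kappa r)^{2}\le(\kappa r)^{4\delta}$ for $\delta\le\tfrac12$, so that $\kappa^4r^2|\ln(\kappa r)|\le c|z|^{1+2\delta}|x-y|^{4\delta}|\ln(\kappa r)|$; on $\kappa r>1$ insert $1\le(1+|x|)^{4\delta}(1+|y|)^{4\delta}(1+|x-y|)^{-4\delta}$ and use that $(1+|x-y|)^{-4\delta}\le c|z|^{2\delta}$ on that set. In both regimes the price is the factor $(1+|x|)^{4\delta}(1+|y|)^{4\delta}$, which $|v_\alpha(x)|^{1/2}|v_\alpha(y)|^{1/2}$ absorbs precisely because $b>4$ --- but only for small $\delta$, roughly $8\delta<b-4$ minus a margin for the logarithms. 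In particular your assertion that the bound holds for \emph{any} $\delta\in(0,1)$ is false under the hypothesis $b>4$: the integral $\int|v_\alpha(x)|\,|x|^{8\delta}\,\mathrm{d}x$ diverges once $8\delta\ge b-4$, and keeping the full $r^2$ gain would require $b>8$, which is exactly the stronger assumption of the Jensen-type result that the paper's remark contrasts with. With this interpolation inserted (and the admissible range of $\delta$ tied to $b$), your argument becomes the paper's proof.
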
 
\begin{proof}
In the following we consider $|z|<1$. The kernel of $(-\Delta-z)^{-1}$ is given by
\begin{equation*}
(-\Delta-z)^{-1}(|x-y|)=\frac{\mathrm{i}\sqrt{z}}{8\pi |x-y|}H_1^{(1)}\left( \sqrt{z} |x-y|\right), \ x,y,\in \mathbb{R}^4,
\end{equation*}
where $H_1^{(1)}$ is the first Hankel function (see \cite{Abramowitz}). Hence,
\begin{align}
r_0(z,|x-y|)&=\left( -\frac{1}{2m_\alpha}\Delta -z \right)^{-1}\left(|x-y|\right)=2m_\alpha\left( -\Delta -2m_\alpha z \right)^{-1}(|x-y|) \notag
\\ &= \frac{m_\alpha\mathrm{i}\sqrt{2 m_\alpha z}}{4 \pi |x-y|} H_1^{(1)} \left( \sqrt{2m_\alpha z} |x-y|\right).\label{eq: free resolvent with hankel}
\end{align}
According to \cite{Abramowitz}, p.$360$, one has $H_1^{(1)}(\zeta)=J_1(\zeta)+\mathrm{i}Y_1(\zeta)$ and
\begin{align*}
J_1(\zeta)&=\frac{\zeta}{2}\sum_{k=0}^\infty \frac{\left( -\frac{1}{4}\zeta^2 \right)^k}{k!(k+1)!},
\\ Y_1(\zeta)&= -\frac{2}{\pi \zeta} +\frac{2}{\pi}\ln\left( \frac{\zeta}{2} \right)J_1(\zeta)-\frac{\zeta}{2\pi} \sum_{k=0}^\infty \left( \psi(k+1)+\psi(k+2) \right)\frac{\left( -\frac{1}{4}\zeta^2 \right)^k}{k! \left(k+1 \right)!},
\end{align*}
where
\begin{equation*}
\psi(1)=-1 , \qquad \psi(k)= \sum_{j=1}^{k-1} \frac{1}{j}- \gamma, \ k \geq 2
\end{equation*}
and $\gamma$ is the Euler–Mascheroni constant. Hence, we obtain
\begin{equation}\label{eq: series expansion of hankel}
H_1^{(1)}(\zeta)= -\frac{2\mathrm{i}}{\pi \zeta}+ \left( \frac{\zeta}{2}+\frac{\zeta \mathrm{i}}{\pi}\ln\left( \frac{\zeta}{2}\right) \right) \sum_{k=0}^\infty a_k\left( \zeta^2\right)^k- \frac{\zeta\mathrm{i}}{2\pi} \sum_{k=0}^\infty b_k \left( \zeta^2\right)^k,
\end{equation}
where
\begin{equation*}
a_k=\frac{(-1)^k}{4^k k!(k+1)!} \qquad \text{and} \qquad b_k=\left(
\psi(k+1)+\psi(k+2)\right)a_k.
\end{equation*}
Note that both series in \eqref{eq: series expansion of hankel} converge for every $\zeta \in \mathbb{C}$. By \eqref{eq: free resolvent with hankel} and \eqref{eq: series expansion of hankel} we get
\begin{align}
|v_\alpha|^\frac{1}{2}(x)r_0(z,|x-y|)|v_\alpha|^\frac{1}{2}(y)&=|v_\alpha|^\frac{1}{2}(x)|v_\alpha|^\frac{1}{2}(y)\frac{m_\alpha\mathrm{i}\sqrt{2 m_\alpha z}}{4 \pi |x-y|} H_1^{(1)} \left( \sqrt{2m_\alpha z} |x-y|\right) \notag
\\ &= G_\alpha(x,y)+zG_1(x,y)+z\ln|z|G_2(x,y)+ G(x,y,z),\label{eq: free resolvent with remainder}
\end{align}
where
\begin{align}\label{eq: G0}
G_\alpha(x,y)&= \frac{m_\alpha}{2\pi^2}\frac{|v_\alpha|^\frac{1}{2}(x)|v_\alpha|^\frac{1}{2}(y)}{|x-y|^2},
\\ G_1(x,y)&= \frac{m_\alpha^2}{4\pi^2}|v_\alpha|^\frac{1}{2}(x)|v_\alpha|^\frac{1}{2}(y)\left(\psi(1)+\psi(2)-\ln(2m_\alpha)-2\ln\left( \frac{|x-y|}{2} \right) \right), \label{eq: G1}
\\ G_2(x,y)&= -\frac{m_\alpha^2}{4\pi^2}|v_\alpha|^\frac{1}{2}(x)|v_\alpha|^\frac{1}{2}(y). \label{eq: G2}
\end{align}
We will show that the remainder $G(x,y,z)$ is a Hilbert-Schmidt kernel and that the Hilbert-Schmidt norm is of order $\mathcal{O}\left(|z|^{1+\delta}\right)$ as $z \rightarrow 0$, where $\delta>0$ is sufficiently small. 
\\Let $\sqrt{2m_\alpha|z|}|x-y|> 1$. By \cite{Abramowitz}, p.364, we have
\begin{equation}\label{eq: estimate abram}
\left| H_1^{(1)}(\zeta)\right| \leq c |\zeta|^{-\frac{1}{2}} , \ |\zeta| \geq 1.
\end{equation}
Relations \eqref{eq: free resolvent with hankel} and \eqref{eq: free resolvent with remainder} imply
\begin{align*}
&|G(x,y,z)|\chi_{\{ \sqrt{2m_\alpha|z|}|x-y|> 1 \}} \notag
\\& \ \ \ \ \ \ \ \ \ \leq c \frac{|z|^\frac{1}{2}|v_\alpha|^\frac{1}{2}(x)|v_\alpha|^\frac{1}{2}(y)}{|x-y|}\left|H_1^{(1)}(\sqrt{2m_\alpha z}|x-y|) \right|+|G_\alpha|+|z||G_1|+|z\ln|z|||G_2|.
\end{align*}
Hence, by definition of the kernels \eqref{eq: G0}-\eqref{eq: G2} together with \eqref{eq: estimate abram} we have
\begin{align}
&|G(x,y,z)|\chi_{\{ \sqrt{2m_\alpha |z|}|x-y|> 1 \}} \notag
\\ & \ \ \ \ \ \ \ \ \  \leq |z||\ln|z|||v_\alpha|^\frac{1}{2}(x)|v_\alpha|^\frac{1}{2}(y)\left(c_1+c_2\left|\ln\left(\frac{|x-y|}{2} \right)\right|\right) \notag
\\ & \ \ \ \ \ \ \ \ \  \leq |z||\ln|z|||v_\alpha|^\frac{1}{2}(x)|v_\alpha|^\frac{1}{2}(y)\frac{(1+|x|)^{4\delta}(1+|y|)^{4\delta}}{(1+|x-y|)^{4\delta}}\left(c_1+c_2\left|\ln\left(\frac{|x-y|}{2} \right)\right|\right) \notag
\\ & \ \ \ \ \ \ \ \ \  \leq c|z|^{1+2\delta}|\ln|z||v_\alpha|^\frac{1}{2}(x)|v_\alpha|^\frac{1}{2}(y)|(1+|x|)^{4\delta}(1+|y|)^{4\delta}\left(c_1+c_2\left|\ln\left(\frac{|x-y|}{2} \right)\right|\right).\label{eq: estimate for G for >1}
\end{align}
Now let $\sqrt{2m_\alpha |z|}|x-y|\leq 1$. Note that in view of \eqref{eq: series expansion of hankel} we have
\begin{equation}\label{eq: remainder}
G(x,y,z)=z|v_\alpha|^\frac{1}{2}(x)\left(\sum_{j=1}^\infty\sum_{k=0}^1 z^j (\ln|z|)^k G_j^k(x,y)\right)|v_\alpha|^\frac{1}{2}(y),
\end{equation}
where the kernels $G_j^k$ are defined by
\begin{equation}
G_j^1(x,y)= -(2m_\alpha)^j\alpha_j|x-y|^{2j},\qquad G_j^0(x,y) = (2m_\alpha)^j\alpha_j|x-y|^{2j} \left(\beta_j-2\ln\left( \frac{|x-y|}{2} \right)\right)
\end{equation}
and the constants $\alpha_j,\beta_j$ are given by
\begin{equation*}
\alpha_j=\frac{(-1)^jm_\alpha^2}{4^{j+1}\pi^2j!(j+1)!}, \qquad \beta_j=\psi(j+1)+\psi(j+2)-\ln(2m_\alpha).
\end{equation*}
By definition of the kernels $G_j^k$ we have
\begin{equation}\label{eq: G(z) with sigma}
G(x,y,z)= z|v_\alpha|^\frac{1}{2}(x)\left(\sigma_1(x,y,z)+\sigma_2(x,y,z) +\sigma_3(x,y,z)\right)|v_\alpha|^\frac{1}{2}(y),
\end{equation}
where
\begin{align*}
\sigma_1(x,y,z)&= \sum_{j=1}^\infty \alpha_j \beta_j \left( \sqrt{2m_\alpha z}|x-y| \right)^{2j}, \\\sigma_2(x,y,z)&= -2\ln\left(\frac{|x-y|}{2} \right)\sum_{j=1}^\infty \alpha_j \left( \sqrt{2m_\alpha z}|x-y| \right)^{2j},
\\ \sigma_3(x,y,z)&= -\ln|z|\sum_{j=1}^\infty \alpha_j \left( \sqrt{2m_\alpha z}|x-y| \right)^{2j}.
\end{align*}
We are going to estimate $\sigma_1, \sigma_2$ and $\sigma_3$  separately. Let $0<\delta<2^{-1}$. Since $\sqrt{2m_\alpha |z|}|x-y|\leq 1$, we have
\begin{align}\label{eq: sigma1}
|\sigma_1(x,y,z)| &\leq \left( \sqrt{2m_\alpha |z|}|x-y|\right)^{4\delta} \sum_{j=1}^\infty |\alpha_j\beta_j| \left( \sqrt{2m_\alpha |z|}|x-y| \right)^{2(j-2\delta)} \notag
\\ &\leq C|z|^{2\delta}|x-y|^{4\delta} \sum_{j=1}^\infty |\alpha_j \beta_j| \leq C_1 |z|^{2\delta}(1+|x|)^{4\delta}(1+|y|)^{4\delta}.
\end{align}
In the last inequality we used the fact that $\sum\limits_{j=1}^\infty |\alpha_j \beta_j|<\infty$.
Analogously we obtain
\begin{align}\label{eq: sigma2}
|\sigma_2(x,y,z)| &\leq 2\left|\ln\left(\frac{|x-y|}{2} \right)\right|\sum_{j=1}^\infty |\alpha_j| \left( \sqrt{2m_\alpha|z|}|x-y| \right)^{2j}  \notag
\\ &\leq C_2 |z|^{2\delta}\left|\ln\left(\frac{|x-y|}{2} \right)\right|(1+|x|)^{4\delta}(1+|y|)^{4\delta}
\end{align}
and also
\begin{equation}\label{eq: sigma3}
|\sigma_3(x,y,z)| \leq C_3|z|^{2\delta}|\ln|z||(1+|x|)^{4\delta}(1+|y|)^{4\delta}. \ \ \ \ \ \ \ \ \ \ \ \ \ \ \ \ \  \ \ \ \
\end{equation}
Hence, by collecting estimates \eqref{eq: sigma1}-\eqref{eq: sigma3} together with \eqref{eq: G(z) with sigma} we get for $|z|<1$ sufficiently small
\begin{align}\label{eq: estimate for G for <1}
&|G(x,y,z)|\chi_{\{ \sqrt{2m_\alpha|z|}|x-y|\leq 1 \}}  \notag
\\ & \ \ \ \ \ \ \ \ \ \ \ \ \ \ \ \ \ \leq |z|^{1+2\delta}|\ln|z|||v_\alpha|^\frac{1}{2}(x)|v_\alpha|^\frac{1}{2}(y)(1+|x|)^{4\delta}(1+|y|)^{4\delta}\left(c_3+c_4\left|\ln\left(\frac{|x-y|}{2} \right)\right|\right).
\end{align}
By combining estimates \eqref{eq: estimate for G for <1} and \eqref{eq: estimate for G for >1} we get
\begin{equation*}
|G(x,y,z)| \leq C|z|^{1+2\delta}|\ln|z|| |v_\alpha|^\frac{1}{2}(x)|v_\alpha|^\frac{1}{2}(y)(1+|x|)^{4\delta}(1+|y|)^{4\delta}\left(1+\left|\ln\left(\frac{|x-y|}{2} \right)\right|\right).
\end{equation*}
Since
\begin{equation*}
\left|\ln\left(\frac{|x-y|}{2}\right)\right| \leq C \max\left\{ |x-y|^\varepsilon, |x-y|^{-\varepsilon} \right\}, \ \varepsilon>0
\end{equation*}
and 
\begin{equation*}
|v_\alpha(x)|\leq C(1+|x|)^{-b}, \ b>4,
\end{equation*}
we can choose $\varepsilon,\delta>0$, such that $0<\delta<\frac{b-4-2\varepsilon}{8}$, which implies that the remainder $G(z)$ is Hilbert-Schmidt and the operator norm is of order $\mathcal{O}\left( |z|^{1+2\delta}|\ln|z|| \right)$. Hence, the operator $G_\alpha^{(\delta)}(z)= |z|^{-1-\delta}G(z)$ is bounded up to $z\leq 0$. Further, we have
\begin{equation*}
|v_\alpha|^\frac{1}{2}r_0(z)|v_\alpha|^\frac{1}{2}=G_\alpha+zG_1+z\ln|z|G_2+|z|^{1+\delta} G_\alpha^{(\delta)}(z), \ \delta >0,
\end{equation*}
which completes the proof.
\end{proof}
\begin{rem}
We used similar arguments as in \cite{Jensen4}, where it was shown that for
\begin{equation*}
|v_\alpha(x)|\leq C(1+|x|)^{-b}, \ b>8,
\end{equation*}
$G(z)$ is of order $\mathcal{O}(|z|^2\ln|z|)$. We allow weaker assumptions on the potential and obtain a weaker estimate as a result. By using arguments from \cite{Jensen4} we can prove the following Lemma.
\end{rem}
\begin{lem}\label{lem: expansion of w}
If $\lambda=0$ is a resonance of $h_\alpha$, then for $z<0$, $|z|$ sufficiently small, the operator $w_\alpha(z)$ has the representation
\begin{equation}\label{eq: expansion of w}
w_\alpha(z)= (z(\ln|z|-\tau_\alpha))^{-1}\langle \cdot , \varphi \rangle \varphi + \left(z(\ln|z|-\tau_\alpha)\right)^{-1+\delta} w_\alpha^{(\delta)}(z),
\end{equation}
where $\delta>0$ is sufficiently small, $\varphi$ is an eigenfunction of the operator $G_\alpha$ corresponding to the eigenvalue $\mu=1$ and $\tau_\alpha \in \mathbb{R}$ is a constant, which depends on the potential $v_\alpha$ and the mass $m_\alpha$. In addition, the operator $w_\alpha^{(\delta)}(z)$ is bounded for $z\leq0$.
\end{lem}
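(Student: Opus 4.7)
The plan is to invert the operator $M(z) := I - |v_\alpha|^{1/2} r_0(z) |v_\alpha|^{1/2}$, since by \eqref{eq: expression for w(z)} we have $w_\alpha(z) = M(z)^{-1}$. Substituting the expansion from Lemma \ref{lem: free resolvent} yields
\begin{equation*}
M(z) = (I - G_\alpha) - zG_1 - z\ln|z|\,G_2 - |z|^{1+\delta} G_\alpha^{(\delta)}(z).
\end{equation*}
By Lemma \ref{lem: resonance equiv to eigenfunction}, $\mu=1$ is a simple eigenvalue of the self-adjoint compact operator $G_\alpha$ with eigenfunction $\varphi = |v_\alpha|^{1/2} f$. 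I would denote by $P$ the orthogonal projection onto $\mathrm{span}(\varphi)$, let $Q=I-P$, and decompose $M(z)$ as a $2\times 2$ block operator with respect to $L^2(\mathbb{R}^4) = PL^2\oplus QL^2$ (Feshbach--Schur reduction). Since $\varphi$ is also an eigenvector of the adjoint, we have $P(I-G_\alpha) = (I-G_\alpha)P = 0$, so all four blocks of the unperturbed operator degenerate in a useful way.

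On $QL^2$ the operator $Q(I-G_\alpha)Q$ is boundedly invertible by simplicity of the eigenvalue, and a Neumann series shows that $QM(z)Q$ remains invertible for $|z|$ small with inverse uniformly bounded. The leading singularity therefore comes from $PM(z)P$. Using $P(I-G_\alpha)P=0$ and the explicit form of $G_1,G_2$, this block equals
\begin{equation*}
PM(z)P = \bigl(-z\langle G_1\varphi,\varphi\rangle - z\ln|z|\,\langle G_2\varphi,\varphi\rangle\bigr)P + O(|z|^{1+\delta})P.
\end{equation*}

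The crucial step is to check that $\langle G_2\varphi,\varphi\rangle\neq 0$. From \eqref{eq: G2}, $G_2$ is of rank one, $G_2 = -\tfrac{m_\alpha^2}{4\pi^2}\,|v_\alpha|^{1/2}\otimes |v_\alpha|^{1/2}$, hence
\begin{equation*}
\langle G_2\varphi,\varphi\rangle = -\frac{m_\alpha^2}{4\pi^2}\,\Bigl|\int_{\mathbb{R}^4} |v_\alpha(y)|\,f(y)\,\mathrm{d}y\Bigr|^2.
\end{equation*}
This quantity is strictly negative: the integral cannot vanish because $v_\alpha\leq 0$ with $v_\alpha\not\equiv 0$ and $f>0$ by Theorem \ref{lem: virtual level implies rsonance}; equivalently, Lemma \ref{lem: <v,f>=0} (together with the assumption that $f$ is a resonance state, not an $L^2$ eigenfunction) forces $\langle v_\alpha,f\rangle\neq 0$. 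Setting $\tau_\alpha := -\langle G_1\varphi,\varphi\rangle/\langle G_2\varphi,\varphi\rangle$ and rescaling $\varphi$ so that $-\langle G_2\varphi,\varphi\rangle = 1$ (both operations are well defined in view of the sign), the $P$-block becomes $PM(z)P = z(\ln|z|-\tau_\alpha)P + O(|z|^{1+\delta})P$.

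It remains to carry out the Schur inversion. Because $(I-G_\alpha)$ annihilates $P$ from either side, the off-diagonal blocks $PM(z)Q$ and $QM(z)P$ are of size $O(z\ln|z|)$, so the Schur correction $PM(z)Q\,(QM(z)Q)^{-1}\,QM(z)P$ is of order $(z\ln|z|)^2$, which is subordinate to the leading $z(\ln|z|-\tau_\alpha)$ term in the $P$-block. Inverting the resulting scalar on $PL^2$ produces the announced singular term $(z(\ln|z|-\tau_\alpha))^{-1}\langle\cdot,\varphi\rangle\varphi$, while the bounded contribution $(QM(z)Q)^{-1}$ on $QL^2$ together with the off-diagonal corrections and the $O(|z|^{1+\delta})$ remainder from Lemma \ref{lem: free resolvent} combine to a bounded error. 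The main obstacle will be the careful bookkeeping that turns this error into precisely $(z(\ln|z|-\tau_\alpha))^{-1+\delta}w_\alpha^{(\delta)}(z)$ with $w_\alpha^{(\delta)}(z)$ bounded up to $z=0$; this boils down to the elementary but delicate estimate $|z|^{-1+\delta}|\ln|z||^{-2}\ll (z(\ln|z|-\tau_\alpha))^{-1+\delta}$ as $z\to 0^-$ for $0<\delta<1$, together with a uniform control of the Neumann series used to invert $QM(z)Q$.
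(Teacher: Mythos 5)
Your proposal is correct and follows essentially the same route as the paper: both invert $I-|v_\alpha|^{1/2}r_0(z)|v_\alpha|^{1/2}$ by a block decomposition with respect to the rank-one spectral projection of $G_\alpha$ at $\mu=1$, identify $\tau_\alpha=\langle G_1\varphi,\varphi\rangle$ after normalizing $\varphi$ so that $\langle G_2\varphi,\varphi\rangle=-1$ (which is legitimate precisely because $\langle v_\alpha,f\rangle\neq 0$ for a resonance), and invert the regular block by a Neumann series. The only cosmetic difference is that you phrase the inversion as a Feshbach--Schur complement while the paper rescales the singular block by $|z|^{-1/2}$ and inverts the resulting $2\times 2$ matrix $C(z)+D(z)$ directly; the estimates involved are the same.
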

\begin{proof}
Let
\begin{equation*}
s_\alpha(z)=I-|v_\alpha|^\frac{1}{2}r_0(z)|v_\alpha|^{\frac{1}{2}}.
\end{equation*}
We will use the expansion of Lemma \ref{lem: free resolvent} in order to compute the inverse
\begin{equation}
s_\alpha^{-1}(z)=w_\alpha(z).
\end{equation}
Let $P_0$ be the one-dimensional projection on the subspace associated with the eigenfunction $\varphi$ of the operator $G_\alpha$ corresponding to the eigenvalue $\mu=1$ (c.f. Lemma \ref{lem: resonance equiv to eigenfunction}) and denote by $P_1$ the projection onto the orthogonal complement of the eigenspace of $\mu$ in $L^2(\mathbb{R}^4)$. Following \cite{Jensen4}, for every $\psi \in L^2(\mathbb{R}^4)$ we have the unique decomposition $\psi=P_0\psi+P_1\psi$, which allows us to write $s_\alpha(z)\psi$ as $S(z)(P_0\psi,P_1\psi)$, where
\begin{equation}
S(z)=
\begin{pmatrix}
P_0s_\alpha(z)P_0 & P_0s_\alpha(z)P_1
\\ P_1s_\alpha(z)P_0 & P_1s_\alpha(z)P_1
\end{pmatrix}.
\end{equation}
Further, let
\begin{equation*}
P(z)=\begin{pmatrix}
|z|^{-\frac{1}{2}} P_0 & 0
\\ 0 & P_1
\end{pmatrix} \qquad \text{and} \qquad B(z)=P(z)S(z)P(z).
\end{equation*}
The entries of $B(z)$ are given by
\begin{align*}
b_{00}(z)&= |z|^{-1}P_0(I-|v_\alpha|^{\frac{1}{2}}r_0(z)|v_\alpha|^{\frac{1}{2}})P_0, \qquad \,b_{01}(z)= |z|^{-\frac{1}{2}}P_0(I-|v_\alpha|^{\frac{1}{2}}r_0(z)|v|^{\frac{1}{2}})P_1,
\\ b_{10}(z)&= |z|^{-\frac{1}{2}}P_1(I-|v_\alpha|^{\frac{1}{2}}r_0(z)|v_\alpha|^{\frac{1}{2}})P_0, \qquad b_{11}(z)= P_1(I-|v_\alpha|^{\frac{1}{2}}r_0(z)|v_\alpha|^{\frac{1}{2}})P_1.
\end{align*}
By the use of Lemma \ref{lem: free resolvent} we have $B(z)=C(z)+D(z)$, where
\begin{align*}
C(z)&=\begin{pmatrix}
P_0\left(G_1+\ln|z|G_2\right)P_0 & 0
\\ 0 & P_1\left( I-G_\alpha \right)P_1
\end{pmatrix} \quad \text{and} \quad D(z)=
\begin{pmatrix}
d_{00}(z) & d_{01}(z) 
\\ d_{10}(z)  & d_{11}(z) 
\end{pmatrix}
\end{align*}
The entries of $D(z)$ are given by
\begin{align*}
d_{00}(z)  &= -|z|^\delta P_0G_\alpha^{(\delta)}(z)P_0,\\
 d_{01}(z) &= |z|^\frac{1}{2}P_0\left( G_1+\ln|z|G_2-|z|^\delta G_\alpha^{(\delta)}(z) \right)P_1,
\\ d_{10}(z)  &= |z|^\frac{1}{2}P_1\left( G_1+\ln|z|G_2-|z|^\delta G_\alpha^{(\delta)}(z) \right)P_0,\\ d_{11}(z)  &= |z| P_1\left(G_1+\ln|z|G_2-|z|^\delta G_\alpha^{(\delta)}(z) \right)P_0.
\end{align*}
By abuse of notation we write
\begin{equation}\label{eq. abuse of notation for D}
D(z)=\mathcal{O}\left( |z|^\delta \right).
\end{equation}
Since $P_1$ projects onto the subspace of functions orthogonal to $\varphi$, the operator $P_1(I-G_\alpha)P_1$ is invertible. Note that by Lemma \ref{lem: resonance equiv to eigenfunction} we have $\langle |v|^\frac{1}{2},\varphi \rangle \not = 0$. Hence, we can normalize $\varphi$, such that
\begin{equation}\label{eq: normalization phi}
\langle |v|^\frac{1}{2},\varphi \rangle = \frac{2\pi}{m_\alpha}.
\end{equation}
Then we have
\begin{equation*}
\langle G_2\varphi, \varphi\rangle = -1 \qquad \qquad \text{and} \qquad \qquad \langle G_1\varphi,\varphi \rangle=\tau_\alpha,
\end{equation*}
where due to \eqref{eq: G1}
\begin{equation*}
\tau_\alpha= \frac{m_\alpha^2}{4\pi^2}\int\int \left(\psi(1)+\psi(2)-\ln(2m_\alpha)-2\ln\left(\frac{|x-y|}{2} \right)\right)|v_\alpha(x)|^{\frac{1}{2}}|v_\alpha(y)|^{\frac{1}{2}}\varphi(x) \varphi(y)\, \mathrm{d}x\mathrm{d}y.
\end{equation*}
Using $P_0=\Vert \varphi \Vert^{-2}\langle \cdot , \varphi\rangle \varphi$ we obtain
\begin{align*}
P_0\left(G_1+\ln|z|G_2\right)P_0  = \frac{(\tau_\alpha - \ln|z|)}{\Vert \varphi \Vert^2}P_0
\end{align*}
and therefore
\begin{equation*}
C^{-1}(z)=\begin{pmatrix}
\frac{\langle \cdot,\varphi\rangle \varphi}{(\tau_\alpha -\ln|z|)} & 0
\\ 0 & K
\end{pmatrix},
\end{equation*}
where $K=\left(P_1(I-G_\alpha)P_1\right)^{-1}$. Now we can write
\begin{equation*}
B(z)=C(z)+D(z)=\left(I+D(z) C^{-1}(z)\right)C(z).
\end{equation*}
By \eqref{eq. abuse of notation for D} we have
\begin{equation*}
\Vert D(z)C^{-1}(z) \Vert \overset{z\rightarrow 0}{\longrightarrow} 0.
\end{equation*}
Therefore, we obtain the inverse of $B(z)$ by the Neumann series
\begin{equation*}
B^{-1}(z)=C^{-1}(z)\left( I-\left( -D(z)C^{-1}(z) \right) \right)^{-1}=C^{-1}(z)+C^{-1}(z)\sum_{n=1}^\infty \left(-D(z)C^{-1}(z)\right)^n.
\end{equation*}
Note that
\begin{equation*}
\sum_{n=1}^\infty \Vert D(z)C^{-1}(z)\Vert^n \leq \frac{\Vert D(z)C^{-1}(z)\Vert}{1-\Vert D(z)C^{-1}(z)\Vert},
\end{equation*}
which together with \eqref{eq. abuse of notation for D} yields
\begin{equation*}
B^{-1}(z) = \begin{pmatrix}
\frac{\langle \cdot,\varphi\rangle \varphi}{(\tau_\alpha -\ln|z|)} & 0
\\ 0 & K
\end{pmatrix} + \mathcal{O}\left(|z|^{\delta}\right).
\end{equation*}
Note that
\begin{equation*}
S^{-1}(z)=P(z)B^{-1}(z)P(z)
\end{equation*}
and $|z|(\tau_\alpha - \ln|z|) = z(\ln|z|-\tau_\alpha)$ for sufficiently small $|z|$. This completes the proof.
\end{proof}
The proof of the next Lemma follows from similar arguments as in \cite{Sobolev}. We adapt the proof to our case.
\begin{lem}\label{lem: sqrt w(z)}
For $z<0$, $|z|$ sufficiently small, the operator $w_\alpha(z)$ is positive and we have
\begin{equation*}
w_\alpha^{\frac{1}{2}}(z)= \frac{\langle \cdot , \varphi \rangle \varphi}{\Vert \varphi \Vert \sqrt{z\left(\ln|z|-\tau_\alpha\right)}} + (z(\ln|z|-\tau_\alpha))^{-\frac{1-\delta}{2}}\tilde{w}_\alpha^{(\delta)}(z),
\end{equation*}
where $\tilde{w_\alpha}^{(\delta)}(z)$ is bounded for $z\leq 0$.
\begin{proof}
For $|z|$ sufficiently small one has $z(\ln|z|-\tau_\alpha)>0$ Hence, by $P_0=P_0^2$ we have
\begin{equation*}
\left( \frac{\langle \cdot , \varphi \rangle \varphi}{z(\ln|z|-\tau_\alpha)} \right)^{\frac{1}{2}} = \left(\frac{\Vert \varphi \Vert^2P_0^2}{z(\ln|z|-\tau_\alpha)}\right)^{\frac{1}{2}}=\frac{\langle \cdot , \varphi \rangle \varphi}{\Vert \varphi \Vert\sqrt{z(\ln|z|-\tau_\alpha)}}.
\end{equation*}
Since $r_\alpha(z)\geq 0$ for $h_\alpha\geq0$ we have $w_\alpha(z) \geq I \geq 0$. By the use of 
\begin{equation*}
\Vert A^{\frac{1}{2}}-B^{\frac{1}{2}} \Vert \leq \Vert A - B \Vert^{\frac{1}{2}}
\end{equation*}
for positive operators $A,B$ we obtain from Lemma \ref{lem: expansion of w}
\begin{equation*}
\left\Vert w_\alpha^{\frac{1}{2}}(z)-\frac{\langle \cdot , \varphi \rangle \varphi}{\Vert \varphi \Vert \sqrt{z\left(\ln|z|-\tau_\alpha\right)}} \right\Vert \leq C (z(\ln|z|-\tau_\alpha))^{-\frac{1-\delta}{2}},
\end{equation*}
which completes the proof.
\end{proof}
\end{lem}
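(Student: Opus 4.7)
The plan is to build the square root of $w_\alpha(z)$ by comparing it to a natural rank-one ``leading'' operator obtained by formally extracting the square root of the singular term in the expansion from Lemma~\ref{lem: expansion of w}, and then to control the remainder via the operator-monotonicity inequality $\|A^{1/2}-B^{1/2}\|\leq\|A-B\|^{1/2}$ valid for bounded positive operators.

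First I would verify that $w_\alpha^{1/2}(z)$ is well-defined. The virtual-level hypothesis gives $h_\alpha\geq 0$, and for $z<0$ the resolvent $r_\alpha(z)=(h_\alpha-z)^{-1}$ is then a positive operator, so $|v_\alpha|^{1/2}r_\alpha(z)|v_\alpha|^{1/2}\geq 0$ and hence $w_\alpha(z)\geq I>0$. I would also observe that for $|z|$ sufficiently small one has $\ln|z|-\tau_\alpha<0$, so the scalar $z(\ln|z|-\tau_\alpha)$ is positive and tends to $+\infty$ as $z\to 0^-$; this makes the real square root in the proposed formula meaningful.

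Next I would introduce the candidate leading term
\begin{equation*}
L(z):=\frac{\langle\cdot,\varphi\rangle\varphi}{\|\varphi\|\sqrt{z(\ln|z|-\tau_\alpha)}}=\frac{\|\varphi\|}{\sqrt{z(\ln|z|-\tau_\alpha)}}\,P_0,
\end{equation*}
where $P_0$ is the orthogonal projection onto $\myspan\{\varphi\}$. Clearly $L(z)\geq 0$, and using $P_0^2=P_0$ a direct computation gives $L(z)^2=(z(\ln|z|-\tau_\alpha))^{-1}\langle\cdot,\varphi\rangle\varphi$, which is precisely the singular rank-one term of $w_\alpha(z)$ appearing in Lemma~\ref{lem: expansion of w}. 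I would then apply the operator inequality with $A=w_\alpha(z)$ and $B=L(z)^2$: Lemma~\ref{lem: expansion of w} yields
\begin{equation*}
\|w_\alpha(z)-L(z)^2\|\leq C\,(z(\ln|z|-\tau_\alpha))^{-1+\delta},
\end{equation*}
and taking square roots gives $\|w_\alpha^{1/2}(z)-L(z)\|\leq C\,(z(\ln|z|-\tau_\alpha))^{-(1-\delta)/2}$. Defining $\tilde w_\alpha^{(\delta)}(z):=(z(\ln|z|-\tau_\alpha))^{(1-\delta)/2}(w_\alpha^{1/2}(z)-L(z))$ then yields the claimed decomposition with $\tilde w_\alpha^{(\delta)}(z)$ uniformly bounded up to $z=0$.

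The main obstacle is minor and purely structural: one must invoke the Powers--St\o rmer-type inequality $\|A^{1/2}-B^{1/2}\|\leq\|A-B\|^{1/2}$, a consequence of operator monotonicity of $t\mapsto t^{1/2}$ on $[0,\infty)$, which requires both operators to be genuinely positive. Positivity of $w_\alpha(z)$ follows immediately from $h_\alpha\geq 0$, positivity of $L(z)$ reduces to the scalar inequality $z(\ln|z|-\tau_\alpha)>0$ for small $|z|$, and the remainder of the argument is bookkeeping of the scalar factor $z(\ln|z|-\tau_\alpha)$.
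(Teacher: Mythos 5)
Your proposal is correct and follows essentially the same route as the paper: positivity of $w_\alpha(z)$ from $h_\alpha\geq 0$, the explicit square root of the rank-one leading term via $P_0=P_0^2$, and the inequality $\Vert A^{1/2}-B^{1/2}\Vert\leq\Vert A-B\Vert^{1/2}$ applied to the expansion from Lemma \ref{lem: expansion of w}. No substantive differences.
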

\subsection*{Three-body system in dimension four}
Now we move to the three-body system. In this section we will prove that every entry $A_{\alpha\beta}(z)$ of the matrix $A(z)$ is a compact operator for every $z\leq0$. By Definition \ref{def: definition of A(z)} we have
\begin{equation*}
A_{\alpha\beta}(z) = W_\alpha^{\frac{1}{2}}(z) K_{\alpha \beta}(z) W_\beta^{\frac{1}{2}}(z).
\end{equation*}
Due to the partial Fourier transform $\Phi_\alpha,\Phi_\beta$, defined by \eqref{eq:fouriertransform}, and the structure of the operator $A_{\alpha\beta}(z)$, we will make use of the mixed coordinates $(x_\alpha,p_\alpha), (x_\beta,p_\beta)$ and various relations such as \eqref{eq: H^0 in all coordinates} and \eqref{eq: k_ab in p_a and p_b}. We start with the proof of the compactness of $K_{\alpha\beta}(z)$ by adapting the proof of \cite{Sobolev} to our case.
\begin{lem}\label{lem: K compact}
The operator $K_{\alpha\beta}(z)$ is compact for every $z\leq 0$.
\end{lem}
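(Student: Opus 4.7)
The plan is to adapt Sobolev's kernel-based compactness argument to $d=4$, exploiting the fact that for $\alpha\neq\beta$ the Jacobi coordinates $(x_\alpha, y_\alpha)$ and the pair $(x_\alpha, x_\beta)$ are related by an invertible linear isomorphism of the reduced configuration space $R_0$. In these mixed position coordinates $K_{\alpha\beta}(z)$ becomes the integral operator with kernel
\begin{equation*}
\widetilde{K}_{\alpha\beta}(z; X_1, X_2) = |v_\alpha(x_\alpha^{(1)})|^{\frac{1}{2}}\, \widetilde{R}_0(z; X_1 - X_2)\, |v_\beta(x_\beta^{(2)})|^{\frac{1}{2}},
\end{equation*}
where $\widetilde{R}_0(z;\cdot)$ denotes the free three-body resolvent kernel in the rotated variables. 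Since the two multipliers decay in independent directions of $R_0\cong\mathbb{R}^{2d}$, their product vanishes at infinity in $R_0\times R_0$, which is the structural feature that distinguishes the $\alpha\neq\beta$ entry from the diagonal entries.

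The actual compactness follows by approximation. Introduce cut-off potentials $v_\gamma^R := v_\gamma\chi_{\{|x_\gamma|\leq R\}}$ and let $K_{\alpha\beta}^R(z)$ be the corresponding operator. I would verify two things: (i) $K_{\alpha\beta}^R(z)$ is compact for every finite $R>0$, and (ii) $K_{\alpha\beta}^R(z)\to K_{\alpha\beta}(z)$ in operator norm as $R\to\infty$. For (i), the truncated kernel is supported in $\{|x_\alpha^{(1)}|\leq R,\ |x_\beta^{(2)}|\leq R\}$; splitting $\widetilde{R}_0 = \widetilde{R}_0^{\mathrm{far}} + \widetilde{R}_0^{\mathrm{near}}$ by a smooth cut-off in $|X_1-X_2|$, the far part contributes a Hilbert--Schmidt kernel thanks to the polynomial decay of $\widetilde{R}_0(z;\cdot)$ in $\mathbb{R}^{2d}$ and the integrability of $|v_\gamma|$ ensured by $b>4$, while the near part reduces, after the spatial truncations, to an operator with kernel supported on a bounded subset of $R_0\times R_0$ whose compactness follows from the Rellich--Kondrachov embedding together with the local smoothing of $R_0(z)$.

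For (ii), I would use $\Vert |v_\gamma|^{1/2} - |v_\gamma^R|^{1/2} \Vert_\infty \leq C R^{-b/2}$ combined with a uniform bound on $\Vert R_0(z)|V_\beta|^{1/2} \Vert$ for $z\leq 0$. The latter is obtained via the two-body Birman--Schwinger principle applied fiberwise, together with the lower bound $H^0_{\alpha\beta}(p_\alpha,p_\beta)\geq c|p_\alpha|^{2\kappa}|p_\beta|^{2\kappa'}$ from \eqref{eq: H_0 etimate2}, which splits the singularity of the free resolvent between the two momentum directions. This yields $\Vert K_{\alpha\beta}(z) - K_{\alpha\beta}^R(z) \Vert \leq C R^{-b/2}\to 0$, and as a norm limit of compact operators, $K_{\alpha\beta}(z)$ is compact.

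The principal obstacle is the endpoint case $z=0$: the free resolvent $R_0(z)$ itself is unbounded on $L^2(R_0)$ there, and the local singularity of $\widetilde{R}_0(0;Z)$ is of order $|Z|^{-6}$ in $\mathbb{R}^{8}$, which is too strong for a direct Hilbert--Schmidt estimate even after the truncation. Handling it cleanly requires the momentum-space analysis sketched above, in which the inequality \eqref{eq: H_0 etimate2} turns the denominator $(H^0_{\alpha\beta}-z)^{-1}$ into a product-type upper bound that makes the kernel of $K_{\alpha\beta}(z)$ in Fourier variables behave, up to controlled errors, as a tensor product whose compactness can be verified factor by factor from the decay of $\widehat{|v_\gamma|^{1/2}}$.
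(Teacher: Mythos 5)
Your scheme founders at the only point where the lemma has real content, namely $z=0$, and it does so at step (ii). The operator $R_0(0)|V_\beta|^{1/2}$ is \emph{not} bounded on $L^2$: after the partial Fourier transform $\Phi_\beta$ it acts fiberwise as $r_0\bigl(-p_\beta^2/(2n_\beta)\bigr)|v_\beta|^{1/2}$, and in $d=4$ one has $\Vert r_0(\zeta)|v_\beta|^{1/2}\Vert^2=\Vert\,|v_\beta|^{1/2}r_0(\zeta)^2|v_\beta|^{1/2}\Vert\gtrsim \ln(1/|\zeta|)\rightarrow\infty$ as $\zeta\rightarrow 0^-$, so the fiber norms blow up as $p_\beta\rightarrow0$. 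The Birman--Schwinger principle controls only the \emph{two-sided} sandwich $|v_\beta|^{1/2}r_0(\zeta)|v_\beta|^{1/2}$, not the one-sided object you invoke, and the single factor $|V_\beta|^{1/2}$ provides no decay in the $p_\beta$ direction. Consequently the bound $\Vert K_{\alpha\beta}(z)-K^R_{\alpha\beta}(z)\Vert\leq CR^{-b/2}$ has no uniform meaning at $z=0$ and the norm-limit argument collapses. The underlying reason is that the failure of square-integrability of the kernel of $K_{\alpha\beta}(0)$ lives at \emph{large momenta} (equivalently, on the diagonal in position space), a region which truncating the potentials in $x$ does not touch: spatial cut-offs shrink the $L^1$-norms of the $v_\gamma$ but leave the divergent integral of $\bigl(H^0_{\alpha\beta}(p,p')\bigr)^{-2}$ over $\{|p|,|p'|>R\}$ intact.

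You correctly identify in your final paragraph that the cure is the product bound \eqref{eq: H_0 etimate2}, which exploits a decaying factor from \emph{each} of the two distinct pairs $\alpha,\beta$, one on each side of $R_0$ --- but you never carry that analysis out, and once you do, the spatial truncation becomes superfluous. This is exactly what the paper's proof does: it truncates in the momenta $p_\alpha,p_\beta$ rather than in the potentials, shows that the block with $|p_\alpha|,|p_\beta|\leq R$ is Hilbert--Schmidt uniformly for $z\leq0$ using \eqref{eq: H_0 etimate2} with $\kappa=\kappa'=\tfrac12$ (the resulting $|p|^{-2}|p'|^{-2}$ is locally square-integrable on $\mathbb{R}^4\times\mathbb{R}^4$), and bounds the remaining blocks in operator norm by $CR^{-2}$ via \eqref{eq: H_0 etimate1}. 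Your step (i), by contrast, is essentially sound --- for $\alpha\neq\beta$ the two truncated potential factors do confine the near-diagonal part of the kernel to a bounded set, and $|Z|^{-6}\chi_{\{|Z|<1\}}\in L^1(\mathbb{R}^8)$ --- so the gap is concentrated entirely in the approximation step.
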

\begin{proof}
In view of \eqref{def: definition K(z)} the operator $K_{\alpha\beta}(z)$ is given by
\begin{equation*}
K_{\alpha \beta}(z) = |V_\alpha|^{\frac{1}{2}}R_0(z) |V_\beta|^{\frac{1}{2}}.
\end{equation*}
It is sufficient to consider the operator
\begin{equation*}
\tilde{K}_{\alpha\beta}(z)=\Phi_\alpha^\ast K_{\alpha \beta}(z)\Phi_\beta.
\end{equation*}
For $R \geq 1$ let
\begin{equation*}
\chi_R: \mathbb{R}^4 \rightarrow \mathbb{R},\chi_R(p)=\begin{cases}
1, & |p|\leq R
\\ 0, &|p|>R
\end{cases}
\end{equation*}
We decompose $\tilde{K}_{\alpha\beta}(z)$ as
\begin{equation*}
\tilde{K}_{\alpha\beta}(z) = Z_{\alpha\beta}^R(z) + Y_{\alpha\beta}^R(z),
\end{equation*}
where
\begin{align*}
Z_{\alpha\beta}^R(z)&=\chi_R(p_\alpha) \tilde{K}_{\alpha\beta}\chi_R(p_\beta),
\\
Y_{\alpha\beta}^R(z)&=  (I-\chi_R(p_\alpha)) \tilde{K}_{\alpha\beta}\chi_R(p_\beta) +  \chi_R(p_\alpha) \tilde{K}_{\alpha\beta}(I-\chi_R(p_\beta)) + (I-\chi_R(p_\alpha)) \tilde{K}_{\alpha\beta}(I-\chi_R(p_\beta)).
\end{align*}
The kernel of the operator $Z_{\alpha\beta}^R(z)$ is square-integrable for $z \leq 0$. Indeed, by the relation $V_\alpha=\Phi_\alpha v_\alpha \Phi_\alpha^\ast$ and by the use of \eqref{eq: k_ab in p_a and p_b} it follows
\begin{align*}\label{Kern von K}
&\left(\tilde{K}_{\alpha\beta}(z) f\right)(x_\alpha,p_\alpha) 
\\ &= \int_{\mathbb{R}^4} \mathrm{d}k_\alpha \frac{\mathrm{e}^{\mathrm{i} k_\alpha x_\alpha}|v_\alpha|^{\frac{1}{2}}(x_\alpha)}{H^0(k_\alpha,p_\alpha)-z}\int_{\mathbb{R}^4} \mathrm{d}x_\beta \mathrm{e}^{-\mathrm{i}k_\beta x_\beta}|v_\beta(x_\beta)|^{\frac{1}{2}}f(x_\beta,p_\beta)
\\ &= c\int_{\mathbb{R}^4} \mathrm{d}p_\beta \frac{\mathrm{e}^{\mathrm{i}x_\alpha\left( d_{\alpha\beta}p_\alpha+e_{\alpha\beta}p_\beta \right)}|v_\alpha|^{\frac{1}{2}}(x_\alpha)}{H^0_{\alpha\beta}(p_\alpha,p_\beta)-z}\int_{\mathbb{R}^4} \mathrm{d}x_\beta \mathrm{e}^{-\mathrm{i}x_\beta\left( d_{\beta\alpha}p_\alpha+e_{\beta\alpha}p_\beta \right)}|v_\beta(x_\beta)|^{\frac{1}{2}}f(x_\beta,p_\beta),
\end{align*}
i.e. the kernel of $\tilde{K}_{\alpha\beta}(z)$ is of the form
\begin{equation}\label{eq: Kernel K}
\tilde{K}_{\alpha\beta}(z) \left( (x,p),(x',p') \right) = c\mathrm{e}^{\mathrm{i}x p d_{\alpha \beta }}\frac{|v_\alpha(x)|^{\frac{1}{2}}\mathrm{e}^{\mathrm{i}x p' e_{\alpha \beta }}\mathrm{e}^{-\mathrm{i}x' p d_{ \beta \alpha}}|v_\beta(x')|^{\frac{1}{2}}}{H_{\alpha \beta}^0(p,p')-z}\mathrm{e}^{-\mathrm{i}x' p' e_{ \beta \alpha}},
\end{equation}
where the constants $d_{\alpha\beta},e_{\alpha\beta}$ are given by \eqref{eq: k_ab in p_a and p_b}. By the use of \eqref{eq: H_0 etimate2} with $\kappa=\kappa'=\frac{1}{2}$ it follows that $Z_{\alpha\beta}^R(z)$ belongs to the Hilbert-Schmidt class for every $z\leq0$. Using estimate \eqref{eq: H_0 etimate1} one can see that the norm of the operator $Y_{\alpha\beta}^R(z)$ is bounded by $CR^{-2}$ for every $z\leq 0$, where $C$ does not depend on $z$. Hence, we have
\begin{equation*}
\Vert\tilde{K}_{\alpha\beta}(z)-Z_{\alpha\beta}^R(z)\Vert = \Vert Y_{\alpha\beta}^R(z)\Vert \rightarrow 0
\end{equation*}
as $R \rightarrow \infty$, which completes the proof.
\end{proof}
By Lemma \ref{lem: K compact} we have $N(z)<\infty$ for every $z<0$, since $W_\alpha(z)$ is bounded for such $z$. Recall relation \eqref{eq: relation of W and w}
\begin{equation}\label{eq: representation of W and w 2}
W_\alpha(z)= \Phi_\alpha w_\alpha\left( z-\frac{p_\alpha^2}{2n_\alpha} \right) \Phi_\alpha^\ast.
\end{equation} The critical case is the existence of a resonance of the two-body Hamiltonian $h_\alpha$, which affects the behaviour of the operator $w_\alpha(z)$ as $z\rightarrow 0$. 
If $h_\alpha$ has a resonance at zero, then according to Lemma \ref{lem: sqrt w(z)} the operator $w_\alpha^{\frac{1}{2}}(z)$ has the representation
\begin{equation}\label{eq: repres w alpha}
w_\alpha^{\frac{1}{2}}(z)= \frac{\langle \cdot , \varphi \rangle \varphi}{\Vert \varphi \Vert \sqrt{z\left(\ln|z|-\tau_\alpha\right)}} + (z(\ln|z|-\tau_\alpha))^{-\frac{1-\delta}{2}}\tilde{w}_\alpha^{(\delta)}(z),
\end{equation}
where $|z| < 1$ can be chosen sufficiently small, such that $\ln|z|-\tau_\alpha <0$. We only have the representation \eqref{eq: repres w alpha} where $\left|z-\frac{p_\alpha^2}{2n_\alpha}\right|$ is sufficiently small and $\ln\left(z-\frac{p_\alpha^2}{2n_\alpha}\right)-\tau_\alpha<0$. Therefore, for every $\alpha$ we introduce the following auxiliary function $\zeta_\alpha : (-\infty,0) \rightarrow \mathbb{R}$, where $\zeta_\alpha \in C^{\infty}(\mathbb{R}_{-})$, $\zeta_\alpha(t)> 0$ for all $t<0$ and
\begin{equation}\label{eq: definition of zeta}
\zeta_\alpha(t)=
\begin{cases}
\sqrt{t\left(\ln |t| - \tau_\alpha\right)}, & t \in (\mu_\alpha,0)
\\ 1 , & t \leq -1
\end{cases}
\end{equation}
The constant $\mu_\alpha\in(-1,0)$ is chosen such that $\ln\left|t\right|-\tau_\alpha<0$ holds for all $t \in [\mu_\alpha,0)$. This allows us to represent $w_\alpha^{\frac{1}{2}}(z)$ as \eqref{eq: repres w alpha} not only for small $z$ but for every $z<0$ by defining the operator
\begin{equation}\label{eq: def u delta}
\tilde{u}_\alpha^{(\delta)}(z)=
\begin{cases}
\tilde{w}_\alpha^{(\delta)}(z), &z\in(\mu_\alpha,0)
\\ \zeta_\alpha(z)^{-\delta}\left(\zeta_\alpha(z)w_\alpha^{\frac{1}{2}}(z)-\Vert \varphi \Vert^{-1}\langle\cdot,\varphi\rangle\varphi\right), & z\in(-\infty,\mu_\alpha]
\end{cases}
\end{equation}
Since $w_\alpha^{\frac{1}{2}}(z)$ is uniformly bounded for $z\leq \mu_\alpha<0$ and $\tilde{w}_\alpha^{(\delta)}(z)$ is continuous up to $z=0$, it follows that
\begin{equation}
w_\alpha^\frac{1}{2}(z)=\zeta_\alpha(z)^{-1}\Vert \varphi \Vert^{-1}\langle\cdot ,\varphi \rangle \varphi + \zeta_\alpha(z)^{-1+\delta}\tilde{u}_\alpha^{(\delta)}(z)
\end{equation}
holds true for every $z<0$ and the operator $\tilde{u}_\alpha^{(\delta)}(z)$ is continuous up to $z=0$. By relation \eqref{eq: relation of W and w} it is evident that for $z = 0$ the kernel of $A_{\alpha\beta}(z)$ admits a singularity in $p_\alpha=0$ and $p_\beta=0$. In the following we will decompose the kernel of $A_{\alpha\beta}(z)$ into four kernels. Simply put, we will cut the region in the variables $p_\alpha,p_\beta$ where both $|p_\alpha|,|p_\beta|$ are small, both $|p_\alpha|,|p_\beta|$ are large and the other two cases where $|p_\alpha|$ is small and $|p_\beta|$ is large, and vice versa. 

Here, it should be noted that in dimension four the mixed cases of one of the variables $|p_\alpha|$,$|p_\beta|$ being small and the other one being large is more complicated compared to the three-dimensional case \cite{Sobolev}. After squaring the kernel the resolvent provides in both cases a decay like $|p_\alpha|^{-4}$ and $|p_\beta|^{-4}$, which in dimension three yields the Hilbert-Schmidt property. This argument cannot be adapted to the four-dimensional case.
\begin{lem}\label{lem: Lemma for main result}
Let $\Gamma_\alpha(z)$ be the operator of multiplication by $\zeta_\alpha\left(z-\frac{p_\alpha^2}{2n_\alpha} \right)$, i.e.
\begin{equation}
(\Gamma_\alpha(z) f)(k_\alpha,p_\alpha)=\zeta_\alpha\left(z-\frac{p_\alpha^2}{2n_\alpha} \right)\cdot f(k_\alpha,p_\alpha).
\end{equation}
Then the operator
\begin{equation}
M_{\alpha\beta}(z)=\Gamma_\alpha(z)^{-1}K_{\alpha\beta}(z)\Gamma_\beta(z)^{-1}
\end{equation}
is compact for every $z\leq 0$.
\end{lem}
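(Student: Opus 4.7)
\medskip

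The strategy is to adapt the proof of Lemma~\ref{lem: K compact} by working in the mixed coordinates $(x_\alpha,p_\alpha)$, $(x_\beta,p_\beta)$ and decomposing $M_{\alpha\beta}(z)$ according to cutoffs on the momentum variables. Fix $R>0$ large enough that $\zeta_\alpha(z-p_\alpha^2/(2n_\alpha))=1$ for $|p_\alpha|\geq R$ (this is possible by the definition \eqref{eq: definition of zeta} of $\zeta$, since the argument then lies in $(-\infty,-1]$), and similarly for $\beta$. Write $I=\chi_R(p)+\chi_R^c(p)$ with $\chi_R$ the indicator of $\{|p|\leq R\}$, and split
\begin{equation*}
M_{\alpha\beta}(z)=M^{--}(z)+M^{-+}(z)+M^{+-}(z)+M^{++}(z)
\end{equation*}
according to the four sign combinations applied to $p_\alpha$ and $p_\beta$. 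On the region $\{|p_\alpha|>R\}\cap\{|p_\beta|>R\}$ one has $\Gamma_\alpha^{-1}=\Gamma_\beta^{-1}=I$, so $M^{++}(z)$ agrees with $\chi_R^c(p_\alpha)K_{\alpha\beta}(z)\chi_R^c(p_\beta)$, which is compact by Lemma~\ref{lem: K compact}.

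For $M^{--}(z)$ I would prove the Hilbert--Schmidt property directly from the kernel \eqref{eq: Kernel K}, divided by $\zeta_\alpha(z-p_\alpha^2/(2n_\alpha))\zeta_\beta(z-p_\beta^2/(2n_\beta))$. The integration in $x_\alpha,x_\beta$ contributes the finite factor $\int|v_\alpha|\int|v_\beta|$ (using $b>4$, so $v_\alpha\in L^1(\mathbb{R}^4)$), and one is reduced to proving
\begin{equation*}
\int_{|p_\alpha|,|p_\beta|\leq R}\frac{dp_\alpha\,dp_\beta}{\zeta_\alpha^2\bigl(z-p_\alpha^2/(2n_\alpha)\bigr)\,\bigl(H^0_{\alpha\beta}(p_\alpha,p_\beta)-z\bigr)^2\,\zeta_\beta^2\bigl(z-p_\beta^2/(2n_\beta)\bigr)}<\infty.
\end{equation*}
Near $p=0$ one has $\zeta^2\sim|p|^2\bigl|\ln|p|\bigr|$, and combining this with the estimate $H^0_{\alpha\beta}\geq c(|p_\alpha|^2+|p_\beta|^2)$ from \eqref{eq: H_0 etimate1}, a polar-coordinate calculation (using $dp=r^3\,dr$ in four dimensions) reduces the inner integral to something of the form $\int_0^R\frac{dr}{r(\ln r)^2}$, which converges. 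This is where the logarithmic factor in $\zeta$ rescues the argument in dimension four.

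The main obstacle is the mixed regions $M^{-+}(z)$ and $M^{+-}(z)$, where exactly one of $|p_\alpha|,|p_\beta|$ is small. As the paper itself emphasizes, the plain squared-kernel/Hilbert--Schmidt estimate that works in dimension three fails here: on $\{|p_\alpha|\leq R\}\times\{|p_\beta|>R\}$ the $p_\beta$--integral $\int_R^\infty r^3(r^2+p_\alpha^2)^{-2}\,dr$ diverges logarithmically. My plan is to handle these by a second decomposition at scale $R'\gg R$: split the large variable into an intermediate annulus $\{R<|p_\beta|\leq R'\}$, where $\Gamma_\beta^{-1}$ is bounded so that a Hilbert--Schmidt estimate for $\Gamma_\alpha^{-1}K_{\alpha\beta}$ restricted to this annulus works (using the sharper Young-type bound \eqref{eq: H_0 etimate2} to balance the $p_\alpha^{-2}|\ln p_\alpha|^{-1}$ singularity against a fractional power $|p_\beta|^{-4\kappa'}$ of $p_\beta$), and a tail $\{|p_\beta|>R'\}$, where $\Gamma_\beta^{-1}=I$ and $K_{\alpha\beta}\chi^c_{R'}(p_\beta)$ has operator norm $O(R'^{-2})$ by the argument of Lemma~\ref{lem: K compact}; the factor $\Gamma_\alpha^{-1}$ on the left then only needs to be bounded in operator norm (not HS) against this small remainder. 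Letting $R'\to\infty$, the tail contribution vanishes in norm while the intermediate piece is compact, so $M^{-+}(z)$ is a norm limit of compact operators, hence compact; the argument for $M^{+-}(z)$ is symmetric. The key technical point to make this work is to show that the Schur/Hilbert--Schmidt bound on the intermediate annulus stays finite uniformly in $R'$ at the correct power; this is where one has to use the full strength of the potential decay $b>4$ and the logarithmic gain in $\zeta^{-2}$.
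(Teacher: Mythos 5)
Your decomposition into four momentum regions, and your treatment of the two diagonal ones --- $M^{++}$ reducing to Lemma~\ref{lem: K compact}, and $M^{--}$ via a direct Hilbert--Schmidt estimate ending in $\int_0^R r^{-1}(\ln r)^{-2}\,\mathrm{d}r<\infty$ --- coincide with the paper's proof and are correct. The gap is in the tail of the mixed region. You claim that on $\{|p_\alpha|\leq R\}\times\{|p_\beta|>R'\}$ the bound $\Vert K_{\alpha\beta}(z)(I-\chi_{R'}(p_\beta))\Vert=O(R'^{-2})$ suffices because ``the factor $\Gamma_\alpha^{-1}$ on the left then only needs to be bounded in operator norm.'' But $\Gamma_\alpha(0)^{-1}\chi_R(p_\alpha)$ is \emph{not} bounded: at $z=0$ its symbol behaves like $|p_\alpha|^{-1}\bigl|\ln|p_\alpha|\bigr|^{-1/2}$ as $p_\alpha\to0$, and this unboundedness is the entire content of the lemma. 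Hence the factorization $\Vert\Gamma_\alpha^{-1}\chi_R\,K_{\alpha\beta}\,(I-\chi_{R'})\Vert\leq\Vert\Gamma_\alpha^{-1}\chi_R\Vert\,\Vert K_{\alpha\beta}(I-\chi_{R'})\Vert$ is unavailable, and a Hilbert--Schmidt estimate on this tail fails for every admissible $\kappa$ in \eqref{eq: H_0 etimate2}, since the outer integral $\int_{R'}^{\infty}s^{3-4\kappa'}\,\mathrm{d}s$ diverges for all $\kappa'\leq1$ --- exactly the four-dimensional obstruction you yourself flag.

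The paper closes precisely this gap by keeping $\Gamma_\alpha^{-1}$ inside the operator and estimating the tail piece $Y_{\alpha\beta}=\chi_r(p)\tilde M_{\alpha\beta}(I-\chi_R(p'))$ through $Y_{\alpha\beta}^{\ast}Y_{\alpha\beta}$: integrating out the small variable $(x,p)$ yields the kernel bound \eqref{eq: estimate kernel Yab}, in which $J(p',p'')=\int_{\{|p|<c\}}p^{-2}(p^2+p'^2)^{-1}(p^2+p''^2)^{-1}\,\mathrm{d}p$ is finite in dimension four (the singularity $p^{-2}$ produced by $\zeta_\alpha^{-2}$ is integrable against $p^{3}\,\mathrm{d}p$) and is bounded by $Cp'^{-2}p''^{-2}$ on the tail, while the factor $\widehat{v_\alpha}(p'-p'')$ supplies the off-diagonal decay needed for a Schur test with weight $y(\xi_1,\xi_2)=(1+|\xi_1|)^{-b/2}|\xi_2|^{-2}$; this gives $\Vert Y_{\alpha\beta}\Vert\to0$ as $R\to\infty$. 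Some argument of this type --- squaring in the order that integrates out the small momentum, or an equivalent Schur-type bound on the composed kernel --- is indispensable, and your proposal does not supply it. (By contrast, your worry that the intermediate-annulus Hilbert--Schmidt norm must be uniform in $R'$ is unfounded: that piece only needs to be compact for each fixed $R'$, and the simple choice $\kappa=0$, $\kappa'=1$ already makes it Hilbert--Schmidt there.)
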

\begin{proof}
We consider the operator
\begin{equation*}
\tilde{M}_{\alpha\beta}(z)=\Phi_\alpha^\ast M_{\alpha\beta}(z)\Phi_\beta .
\end{equation*}
The compactness for $z<0$ follows from Lemma \ref{lem: K compact}. We only need to consider the case $z=0$. Similar to \eqref{eq: Kernel K} the kernel of $\tilde{M}_{\alpha\beta}(0)$ is given by
\begin{equation*}
\tilde{M}_{\alpha\beta}\left( (x,p),(x',p') \right) = c\mathrm{e}^{\mathrm{i}x p d_{\alpha \beta }}\frac{|v_\alpha(x)|^{\frac{1}{2}}\mathrm{e}^{\mathrm{i}x p' e_{\alpha \beta }}\mathrm{e}^{-\mathrm{i}x' p d_{ \beta \alpha}}|v_\beta(x')|^{\frac{1}{2}}}{\zeta_\alpha\left(-\frac{p^2}{2n_\alpha} \right) H_{\alpha \beta}^0(p,p')\zeta_\beta\left(-\frac{p'^2}{2n_\beta} \right)}\mathrm{e}^{-\mathrm{i}x' p' e_{ \beta \alpha}}.
\end{equation*}
Let $\mu_\alpha,\mu_\beta<0$ be in accordance with \eqref{eq: definition of zeta} and $0<r<\min(|\mu_\alpha|,|\mu_\beta|\}$. Denote by $\chi_r(p)$ the multiplication by the characteristic function of $\left\{p\in \mathbb{R}^4 \ : \frac{p^2}{2n}< r \right\}$, where $n=\min\{n_\alpha,n_\beta\}$. We decompose 
\begin{equation*}
\tilde{M}_{\alpha\beta} = \tilde{M}_{\alpha\beta}^1+\tilde{M}_{\alpha\beta}^2+\tilde{M}_{\alpha\beta}^3,
\end{equation*}
where
\begin{align}
\tilde{M}_{\alpha\beta}^1&=\chi_r(p)\tilde{M}_{\alpha\beta}\chi_r(p'), \label{eq: kernel M1}
\\ \tilde{M}_{\alpha\beta}^2&=\chi_r(p)\tilde{M}_{\alpha\beta}(I-\chi_r(p'))+(I-\chi_r(p))\tilde{M}_{\alpha\beta}\chi_r(p'), \label{eq: kernel M2}
\\ \tilde{M}_{\alpha\beta}^3&=(I-\chi_r(p))\tilde{M}_{\alpha\beta}(I-\chi_r(p')).
\end{align}
The compactness of $\tilde{M}_{\alpha\beta}^3$ follows from Lemma \ref{lem: K compact}. 
\\Let us prove that $\tilde{M}_{\alpha\beta}^2$ is compact. We consider only the first summand, the second one can be treated analogously. Let $R > r>0$ be fixed. Then the first summand of $\tilde{M}_{\alpha\beta}^2$ can be written as
\begin{equation} 
\chi_r(p)\tilde{M}_{\alpha\beta}(I-\chi_r(p'))=X_{\alpha\beta}+Y_{\alpha\beta},
\end{equation}
where 
\begin{equation}
X_{\alpha\beta} = \chi_r(p)\tilde{M}_{\alpha\beta}(\chi_R(p')-\chi_r(p')), \qquad \qquad Y_{\alpha\beta} = \chi_r(p)\tilde{M}_{\alpha\beta}(I-\chi_R(p')).
\end{equation}
By the use of $H_{\alpha\beta}^0(p,p') \geq cp'^2$, the absolute value of the kernel of $X_{\alpha\beta}$ can be estimated from above by
\begin{equation*}
c\chi_r(p)\frac{|v_\alpha(x)|^{\frac{1}{2}}|v_\beta(x')|^{\frac{1}{2}}}{|p||p'|^2}(\chi_R(p')-\chi_r(p')),
\end{equation*}
which is square-integrable with respect to the arguments $x,x',p,p'$. The kernel of $Y_{\alpha\beta}$ is given by
\begin{equation*}
Y_{\alpha\beta}((x,p),(x',p'))= c\chi_r(p)(1-\chi_R(p'))\mathrm{e}^{\mathrm{i}x p d_{\alpha \beta }}\frac{|v_\alpha(x)|^{\frac{1}{2}}\mathrm{e}^{\mathrm{i}x p' e_{\alpha \beta }}\mathrm{e}^{-\mathrm{i}x' p d_{ \beta \alpha}}|v_\beta(x')|^{\frac{1}{2}}}{\zeta_\alpha\left(-\frac{p^2}{2n_\alpha} \right) H_{\alpha \beta}^0(p,p')}\mathrm{e}^{-\mathrm{i}x' p' e_{ \beta \alpha}}.
\end{equation*}
We will show that $Y_{\alpha\beta}^\ast Y_{\alpha\beta}$ is continuous and the operator norm tends to zero as $R\rightarrow \infty$. The kernel of $Y_{\alpha\beta}^\ast Y_{\alpha\beta}$ is given by
\begin{align*}
Y_{\alpha\beta}^\ast &Y_{\alpha\beta}\left((x'',p''),(x',p')\right) = \int \int \overline{Y_{\alpha\beta}\left((x,p),(x',p')\right)}Y_{\alpha\beta}\left((x,p),(x'',p'')\right) \mathrm{d}x\mathrm{d}p.
\end{align*}
Hence, we have
\begin{align}\label{eq: estimate YY*}
|Y_{\alpha\beta}^\ast Y_{\alpha\beta}\left((x'',p''),(x',p')\right)| 
\leq c|\widehat{v_\alpha}(p'-p'')||v_\beta(x')|^{\frac{1}{2}}|v_\beta(x'')|^{\frac{1}{2}}J(p',p'')(I-\chi_R(p''))(1-\chi_R(p')),
\end{align}
where
\begin{equation*}
\widehat{v_\alpha}(p'-p'') = \int_{\mathbb{R}^4} |v_\alpha(x)| \mathrm{e}^{-\mathrm{i}e_{\alpha\beta}x(p'-p'')} \, \mathrm{d}x, \qquad J(p',p'')=\int_{\left\{|p|<\sqrt{2nr}\right\}} \frac{1}{p^2(p^2+p'^2)(p^2+p''^2)}\, \mathrm{d}p.
\end{equation*}
In view of the characteristic functions $(I-\chi_R(p''))$ and $(1-\chi_R(p'))$ we have $|p'|,|p''|\geq c>0$ and therefore
\begin{equation*}
J(p',p'')\leq\frac{C}{p'^2p''^2}
\end{equation*}
for such $p'$ and $p''$, which implies
\begin{equation}\label{eq: estimate kernel Yab}
|Y_{\alpha\beta}^\ast Y_{\alpha\beta}\left((x'',p''),(x',p')\right)| \leq C\frac{\widehat{|v_\alpha}(p'-p'')|}{p'^2p''^2}|v_\beta(x')|^{\frac{1}{2}}|v_\beta(x'')|^{\frac{1}{2}}(I-\chi_R(p''))(1-\chi_R(p')).
\end{equation}
For $\xi_1,\xi_2 \in \mathbb{R}^4\backslash\{0\}$ and $b>4$ we define the function
\begin{equation*}
y(\xi_1,\xi_2)=(1+|\xi_1|)^{-\frac{b}{2}}|\xi_2|^{-2}.
\end{equation*}
By assumption \eqref{eq: assumptions on potentials in dimension 4} we have $v_\alpha \in L^1(\mathbb{R}^4) \cap L^\infty(\mathbb{R}^4)$ and $\widehat{v_\alpha} \in L^2(\mathbb{R}^4) \cap L^\infty(\mathbb{R}^4)$. Hence, by the use of \eqref{eq: estimate kernel Yab} and $(1+|\cdot|)^{-\frac{b}{2}}|v_\beta(\cdot)|^\frac{1}{2}\in L^1(\mathbb{R}^4)$ we have 
\begin{equation*}
\int \left|Y_{\alpha\beta}^\ast Y_{\alpha\beta}\left((x'',p''),(x',p')\right)y(x',p')\right|\, \mathrm{d}x' \mathrm{d}p' \leq \frac{|v_\beta(x'')|^\frac{1}{2}}{|p''|^2}C_R  \leq y(x'',p'') C_R,
\end{equation*}
where $C_R\rightarrow 0$ as $R\rightarrow \infty$. By symmetry we also have
\begin{equation*}
\int \left|Y_{\alpha\beta}^\ast Y_{\alpha\beta}\left((x'',p''),(x',p')\right)y(x'',p'')\right|\, \mathrm{d}x'' \mathrm{d}p'' \leq  \frac{|v_\beta(x')|^\frac{1}{2}}{|p'|^2}C_R \leq y(x',p')C_R.
\end{equation*}
Hence, we can apply the Schur test (see \cite{Schur}) to conclude that $Y_{\alpha\beta}$ is a bounded operator on $L^2(\mathbb{R}^4)$, where the operator norm tends to zero as $R\rightarrow \infty$. By applying the same arguments to the second kernel of \eqref{eq: kernel M2} we conclude that $\tilde{M}_{\alpha\beta}^2$ is compact.

It remains to show that $\tilde{M}_{\alpha\beta}^1$ is compact. By definition of the function \eqref{eq: definition of zeta} and in view of the characteristic functions $\chi_r(p), \chi_r(p')$, where $0<r<\mu<\min\{|\mu_\alpha|,|\mu_\beta|\}$, it is sufficient to show that the integral
\begin{equation}
\int_{|p|<\mu} \int_{|p'|<\mu} K\left(p,p' \right)\mathrm{d}p'\mathrm{d}p
\end{equation}
is finite, where $\mu>0$ is sufficiently small and the kernel $K$ is given by
\begin{equation*}
K(p,p') = \frac{1}{|p|^2\left|\ln|p|\right|\left(H_{\alpha\beta}^0(p,p')\right)^2|p'|^2\left|\ln|p'|\right|}.
\end{equation*}
Note that
\begin{equation*}
\left(H_{\alpha\beta}^0(p,p')\right)^2 \geq c|p|^{4\kappa}|p'|^{4\kappa'}, \qquad \qquad \kappa+\kappa'=1.
\end{equation*}
We set $\kappa=0$ and use spherical coordinates $p=(\omega,\rho),\ p'=(\omega',\rho')$ to obtain
\begin{align}
\int\limits_{|p|<\mu} \int\limits_{|p'|<\mu}K(p,p')\, \mathrm{d}p'\mathrm{d}p & = \int\limits_{|p|<\mu} \left( \int\limits_{|p'|\leq |p|} K(p,p')\, \mathrm{d}p'+\int\limits_{|p|<|p'|<\mu}K(p,p')\, \mathrm{d}p' \right)\mathrm{d}p  \notag
\\ &\leq C\int\limits_{|p|<\mu}   \frac{1}{p^2 \left|\ln|p|\right|} \left(\int\limits_{|p|\leq|p'|<\mu} \frac{1}{p'^6\left|\ln|p'|\right|}\, \mathrm{d}p'\right) \mathrm{d}p \notag
\\ & \leq C' \int_0^\mu \frac{\rho}{|\ln\rho|}\left( \int_{\rho}^\mu  \frac{1}{\rho'^3\left|\ln \rho' \right|}\mathrm{d}\rho'\right) \mathrm{d}\rho \notag
\\ &= C' \int_0^\mu  \frac{\rho}{\left|\ln \rho\right|} F(\mu,\rho)\, \mathrm{d}\rho, \label{eq: int for K(p,p prime)}
\end{align}
where the function $F$ is given by
\begin{equation*}
F(\mu,\rho)=\int_{\rho}^\mu  \frac{1}{\rho'^3\left|\ln \rho' \right|}\mathrm{d}\rho'=- \frac{1}{2\rho'^2 \left| \ln \rho' \right|}\mathop{\bigg|}\limits_{\rho}^\mu - \int_{\rho}^\mu \frac{1}{2\rho'^3\left| \ln \rho' \right|^2}.
\end{equation*}
For $\mu>0$ sufficiently small we have $\left| \ln \rho' \right| \geq 1$ and therefore
\begin{equation}\label{eq: integration by parts for F(mu,rho)}
\left| F(\mu,\rho) \right| \leq C(\mu) + \frac{1}{2\rho^2 \left| \ln  \rho \right|} + \frac{1}{2}\left| F(\mu,\rho) \right|.
\end{equation}
Hence, by inserting \eqref{eq: integration by parts for F(mu,rho)} into \eqref{eq: int for K(p,p prime)} we get
\begin{equation}
\int\limits_{|p|<\mu} \int\limits_{|p'|<\mu}K(p,p')\, \mathrm{d}p'\mathrm{d}p \leq C' \int_0^\mu  \frac{\rho}{\left|\ln \rho\right|} F(\mu,\rho)\, \mathrm{d}\rho \leq C_1+C_2\int_{0}^\mu \frac{1}{\rho(\ln\rho)^2}\, \mathrm{d}\rho<\infty,
\end{equation}
which completes the proof.
\end{proof}
\section{Proof of the main results}
We are ready to prove Theorem \ref{main result} and Theorem \ref{main result 2}.
\begin{proof}[Proof of Theorem \ref{main result}]
Let $d=4$. At first we assume that every two-body Hamiltonian $h_\alpha, \ \alpha \in \{12,23,31\}$ has a virtual level at zero. According to Lemma \ref{lem: <v,f>=0}, $\lambda=0$ is not an eigenvalue of $h_\alpha$. In the case of resonances in every subsystem, every entry $A_{\alpha\beta}(z)$ of $A(z)$ can be represented as
\begin{align*}
A_{\alpha\beta}(z)=&\Pi_\alpha \left( \Gamma_\alpha(z) \right)^{-1}K_{\alpha\beta}(z)\Gamma_\beta(z)^{-1}\Pi_\beta + \tilde{U}_\alpha^{(\delta)}(z)\Gamma_\alpha(z)^{-1+\delta}K_{\alpha\beta}(z)\left(\Gamma_\beta(z) \right)^{-1}\Pi_\beta
\\ & + \Pi_\alpha\Gamma_\alpha (z)^{-1}K_{\alpha\beta}(z)\left(\Gamma_\beta(z) \right)^{-1+\delta}\tilde{U}_\beta^{(\delta)}(z) +\tilde{U}_\alpha^{(\delta)}(z) \Gamma_\alpha(z)^{-1+\delta}K_{\alpha\beta}(z)\Gamma_\beta(z)^{-1+\delta}\tilde{U}_\beta^{(\delta)}(z),
\end{align*}
where the operator $\Pi_\alpha$ is defined by
\begin{equation*}
(\Pi_\alpha f)(k_\alpha,p_\alpha)= \Vert \varphi_\alpha \Vert^{-1}(\Phi_\alpha \varphi_\alpha)(k_\alpha)\int f(k_\alpha',p_\alpha)\overline{(\Phi_\alpha\varphi_\alpha)(k_\alpha')}\, \mathrm{d}k_\alpha'
\end{equation*}
and $\tilde{U}_\alpha^{(\delta)}(z)$ is given by
\begin{equation*}
\tilde{U}_\alpha^{(\delta)}(z)=\Phi_\alpha \tilde{u}_\alpha^{(\delta)}\left(z-\frac{p_\alpha^2}{2n_\alpha}\right) \Phi_\alpha^\ast,
\end{equation*}
where $\tilde{u}_\alpha^{(\delta)}(z)$ is defined by \eqref{eq: def u delta}. Since $\Pi_\alpha,\Pi_\beta,\tilde{U}_\alpha^{(\delta)}(z),\tilde{U}_\beta^{(\delta)}(z)$ are bounded operators for $z\leq 0$, the finiteness of $\sigma_{\mathrm{disc}}(H)$ follows from Lemma \ref{lem: Lemma for main result} and Theorem \ref{thm counting functions of A and N}. Now assume that one subsystem, say $\alpha$, does not have a resonance. In case of $\lambda=0$ being a regular point of $h_\alpha$, the operator $w_\alpha(z)$ is continuous up to $z=0$. Indeed, one can easily see that $\mu =1 $ is not an eigenvalue of the operator with the kernel 
\begin{equation*}
G_\alpha(x,y)=\frac{m_\alpha}{2\pi^2}\frac{|v_\alpha|^\frac{1}{2}(x)|v_\alpha|^\frac{1}{2}(y)}{|x-y|^2}.
\end{equation*}
Similar to Lemma \ref{lem: expansion of w} one has
\begin{equation*}
w_\alpha(z)= \left( I-G_\alpha+o(1) \right)^{-1} = \left(I-G_\alpha\right)^{-1}+o(1), \ z \rightarrow 0.
\end{equation*}
This implies the finiteness of $\sigma_{\mathrm{disc}}(H)$ in this case aswell. If $\lambda=0$ is an eigenvalue of $h_\alpha$, then we do not need to distinguish between dimensions $d=4$ and $d\geq5$, since the finiteness of $\sigma_{\mathrm{disc}}(H)$ follows from similar arguments as in \cite{Zhislin}. Indeed, by Lemma \ref{lem: behaviour of the resonance state} the virtual level is always an eigenvalue for $d \geq 5$. By Corollary \ref{Cor: proof} there exists a constant $\mu>0$, such that for every function $g \in \dot{H}^1(\mathbb{R}^d)\backslash\{0\}$ with $\langle \nabla g,\nabla f\rangle=0$ one has
\begin{equation*}
\langle (-\Delta+v)g,g\rangle \geq \mu \Vert |\nabla g| \Vert^2.
\end{equation*}
Now we can repeat the same arguments as in the proof of \cite[Theorem 2.1]{Zhislin} to prove the existence of a finite-dimensional subspace $M\subset \mathcal{D}(H)$, such that
\begin{equation*}
\langle H\psi,\psi\rangle \geq 0
\end{equation*}
holds for every $\psi \perp M$. This concludes the proof.
\end{proof}
The absence of the Efimov effect for the antisymmetric case follows from Theorem \ref{main result} by a slight modification.
\begin{proof}[Proof of Theorem \ref{main result 2}] 
We have $\langle v_\alpha,\varphi\rangle=0$ for every antisymmetric function $\varphi$, since the potentials $v_\alpha$ satisfy $v_\alpha(x_{ij})=v_\alpha(-x_{ij})$. Hence, by Lemma \ref{lem: <v,f>=0} a virtual level of $h_\alpha^{\mathrm{as}}$ is always an eigenvalue for $d\geq 4$. It is easy to see that this eigenvalue has always finite multiplicity. Let $E_d$ be the corresponding eigenspace. Similar to Corollary \ref{Cor: proof}, by considering the orthogonal complement of $E_d$ with respect to $\Vert \cdot \Vert_{(1)}$, there exists a constant $\mu_d>0$ such that for every function $g\perp E_d$ in $\dot{H}^1(\mathbb{R}^d)$ one has
\begin{equation*}
\langle h_\alpha^{\mathrm{as}} g,g\rangle \geq \mu_d \Vert |\nabla g| \Vert^2.
\end{equation*}
The finiteness of $\sigma_{\mathrm{disc}}(H^{\mathrm{as}})$ now follows from \cite[Theorem 2.1]{Zhislin}.
\end{proof}

\begin{appendices}
\section{Coordinate system}
Let $x_i$ be the coordinate of the particle with mass $m_i,\ i \in \{1,2,3\}$ and denote by $k_i$ the conjugate variable of $x_i$. Let $(x_\alpha,y_\alpha), \ \alpha \in \{12,23,31\}$ be any pair of Jacobi coordinates and introduce the set of variables conjugate with respect to the Jacobi-coordinates
\begin{align}
&k_{12}= \frac{m_2k_1-m_1k_2}{m_1+m_2}, \qquad p_{12}= \frac{m_3(k_1+k_2)-(m_1+m_2)k_3}{m_1+m_2+m_3}, \label{Jacobi in momentum 1}
\\ &k_{23}= \frac{m_3k_2-m_2k_3}{m_2+m_3}, \qquad p_{23}= \frac{m_1(k_2+k_3)-(m_2+m_3)k_1}{m_1+m_2+m_3},
\\ &k_{31}= \frac{m_1k_3-m_3k_1}{m_3+m_1}, \qquad p_{31}= \frac{m_2(k_3+k_1)-(m_3+m_1)k_2}{m_1+m_2+m_3}. \label{Jacobi in momentum 3}
\end{align}
The reduced masses $m_\alpha, n_\alpha$ in \eqref{eq: H^0 in all coordinates} are given by 
\begin{align}
&m_{12}= \frac{m_1m_2}{m_1+m_2}, \qquad n_{12}= \frac{m_3(m_1+m_2)}{m_1+m_2+m_3}, \label{eq: reduced masses1}
\\ &m_{23}= \frac{m_2m_3}{m_2+m_3}, \qquad n_{23}= \frac{m_1(m_2+m_3)}{m_2+m_2+m_3}, 
\\ &m_{31}= \frac{m_3m_1}{m_3+m_1}, \qquad n_{31}= \frac{m_2(m_3+m_1)}{m_2+m_2+m_3}. \label{eq: reduced masses2}
\end{align}
The relations \eqref{eq: k_ab in p_a and p_b} of coordinates $(p_\alpha,p_\beta)$ and $(k_\alpha,p_\alpha)$ are given by
\begin{align}
&k_{12}=-p_{23}-\frac{m_1}{m_1+m_2}p_{12} = p_{31}+\frac{m_2}{m_1+m_2}p_{12}, \label{Jacobi with p and k 1}
\\ &k_{23}=-p_{31}-\frac{m_2}{m_2+m_3}p_{23} = p_{12}+\frac{m_3}{m_1+m_2}p_{23},
\\ &k_{31}=-p_{12}-\frac{m_3}{m_1+m_2}p_{31} = p_{23}+\frac{m_1}{m_1+m_2}p_{31}.\label{Jacobi with p and k 3}
\end{align}
The kinetic energy $H^0$ can now be expressed by any pair $(p_\alpha,p_\beta)$, i.e. $H^0$ takes the form
\begin{equation}\label{AP last}
\frac{p_{12}^2}{2m_{{23}}}+ \frac{\langle p_{12},p_{23} \rangle}{m_2}+\frac{p_{23}^2}{2m_{{12}}} =\frac{p_{23}^2}{2m_{{31}}}+ \frac{\langle p_{23},p_{31} \rangle}{m_3}+\frac{p_{31}^2}{2m_{{23}}}=\frac{p_{31}^2}{2m_{{12}}}+ \frac{\langle p_{31},p_{12} \rangle}{m_1}+\frac{p_{12}^2}{2m_{{31}}}.
\end{equation}
For more details see \cite{Fad}.
\section{Faddeev equations}
We only sketch a brief derivation of the Faddeev equations (see \cite{Motovilov} for more details).
\\ Consider the bound-state equation for the three-body Hamiltonian $H$.
\begin{equation}\label{eq: eigenvalue three body}
\left( H_0 +\sum_{\alpha}V_\alpha\right)u=zu,
\end{equation}
where $z<0$ and $\alpha\in\{12,23,31\}$. Denote $R_0(z)=(H_0-z)^{-1}$, then
\begin{equation*}
u=-R_0(z)\sum_{\alpha}V_\alpha u.
\end{equation*}
By decomposing $u$ in the so called Faddeev components
\begin{equation*}
u=\sum_\alpha u_\alpha, \qquad u_\alpha = -R_0V_\alpha u
\end{equation*}
and defining
\begin{equation*}
H_\alpha=H_0+V_\alpha \qquad \text{and} \qquad R_\alpha(z)=(H_\alpha-z)^{-1},
\end{equation*}
one has
\begin{align*}
u_\alpha = -R_0(z)V_\alpha\sum_\alpha u_\alpha &\Longleftrightarrow u_\alpha +R_0(z)V_\alpha u_\alpha = -R_0(z)V_\alpha(u_\beta+u_\gamma)
\\ &\Longleftrightarrow R_\alpha(z)(H_\alpha-z)(I+R_0(z)V_\alpha)u_\alpha = -R_0(z)V_\alpha(u_\beta+u_\gamma)
\\ &\Longleftrightarrow R_\alpha(z)(H_0+V_\alpha-z)u_\alpha = -R_0(z)V_\alpha(u_\beta+u_\gamma)
\\ &\Longleftrightarrow u_\alpha = -R_\alpha(z)V_\alpha(u_\beta+u_\gamma).
\end{align*}
By the assumption $V_\alpha\leq 0$ and the resolvent identity
\begin{equation*}
R_\alpha(z)=R_0(z)-R_0(z)V_\alpha R_\alpha(z),
\end{equation*}
one arrives at
\begin{equation}\label{eq: faddeev components}
u_\alpha = R_0(z)(I+|V_\alpha| R_\alpha)|V_\alpha|(u_\beta+u_\gamma).
\end{equation}
Now we define
\begin{equation*}
W_\alpha(z)=I+|V_\alpha|^{\frac{1}{2}} R_\alpha(z)|V_\alpha|^{\frac{1}{2}},
\end{equation*}
then equation \eqref{eq: faddeev components} becomes
\begin{equation*}
u_\alpha = R_0(z)|V_\alpha|^\frac{1}{2}W_\alpha(z)|V_\alpha|^\frac{1}{2}(u_\beta+u_\gamma).
\end{equation*}
By making the substitution
\begin{align*}
&f_\alpha=W_\alpha^\frac{1}{2}(z)|V_\alpha|^\frac{1}{2}(u_\beta+u_\gamma),
\\ &f_\beta=W_\beta^\frac{1}{2}(z)|V_\beta|^\frac{1}{2}(u_\alpha+u_\gamma),
\\ &f_\gamma=W_\gamma^\frac{1}{2}(z)|V_\gamma|^\frac{1}{2}(u_\alpha+u_\gamma),
\end{align*}
the system of equations is now given by the Faddeev equations
\begin{align}
f_{12}&=W_{12}^\frac{1}{2}(z)|V_{12}|^\frac{1}{2}R_0(z)|V_{23}|^\frac{1}{2}W_{23}^\frac{1}{2}(z)f_{23}+W_{12}^\frac{1}{2}(z)|V_{12}|^\frac{1}{2}R_0(z)|V_{31}|^\frac{1}{2}W_{31}^\frac{1}{2}(z)f_{31},
\\ f_{23}&=W_{23}^\frac{1}{2}(z)|V_{23}|^\frac{1}{2}R_0(z)|V_{12}|^\frac{1}{2}W_{12}^\frac{1}{2}(z)f_{12}+W_{23}^\frac{1}{2}(z)|V_{23}|^\frac{1}{2}R_0(z)|V_{31}|^\frac{1}{2}W_{31}^\frac{1}{2}(z)f_{31},
\\ f_{31}&=W_{31}^\frac{1}{2}(z)|V_{31}|^\frac{1}{2}R_0(z)|V_{12}|^\frac{1}{2}W_{12}^\frac{1}{2}(z)f_{12}+W_{31}^\frac{1}{2}(z)|V_{31}|^\frac{1}{2}R_0(z)|V_{23}|^\frac{1}{2}W_{23}^\frac{1}{2}(z)f_{23}.
\end{align}
In other words, the eigenvalue equation \eqref{eq: eigenvalue three body} is now formulated as
\begin{equation*}
A(z)F=F, \qquad F=(f_{12},f_{23},f_{31}),
\end{equation*}
where $A(z)$ is a $3\times 3-$matrix with entries
\begin{equation*}
A_{\alpha\beta}(z) = W_\alpha^\frac{1}{2}(z) |V_\alpha|^\frac{1}{2}R_0(z)|V_\beta|^\frac{1}{2} W_\beta^\frac{1}{2}(z).
\end{equation*}
\end{appendices}
\section*{Acknowledgements}
We are deeply grateful to Timo Weidl and Semjon Vugalter for proposing the problem, helpfull discussions and a wide range of suggestions. We would also like to thank Jens Wirth for many usefull hints.
The main part of the research for this paper was done while visiting the Mittag-Leffler Institute within the semester program \textit{Spectral Methods in Mathematical Physics}.
\bibliography{Bib}{}
\bibliographystyle{abbrv}
\end{document}